\newcommand{\mb}{\bm{b}}
\newcommand{\mc}{\bm{c}}
\newcommand{\hmc}{\breve{\bm{c}}}
\newcommand{\hmx}{\breve{\x}}
\newcommand{\n}{\bm{n}}
\newcommand{\x}{\bm{x}}
\newcommand{\y}{\bm{y}}
\newcommand{\cA}{\mathcal{A}}
\newcommand{\cB}{\mathcal{B}}
\newcommand{\cP}{\mathcal{P}_{\Omega}}
\newcommand{\cH}{\mathcal{H}}
\newcommand{\ws}{\mathcal{D}}
\newcommand{\C}{\mathbb{C}}
\newcommand{\N}{\mathbb{N}}
\newcommand{\R}{\mathbb{R}}
\newcommand{\Z}{\mathbb{Z}}
\newcommand{\gep}{\varepsilon}
\newcommand{\wh}{\widehat}
\newcommand{\ol}{\overline}
\newcommand{\pP}{\mathsf{P}}
\newcommand{\be}{\begin{equation}}
\newcommand{\ee}{\end{equation}}
\newcommand{\la}{\langle}
\newcommand{\ra}{\rangle}
\newcommand{\setsp}{\;:\;}     %set separator
\newcommand{\tp}{\mathsf{T}}  %transpose
\renewcommand{\le}{\leqslant}
\renewcommand{\ge}{\geqslant}
\newtheorem{lemma}{Lemma}
\newtheorem{theorem}[lemma]{Theorem}
\newcommand{\tpctf}{\operatorname{TP-\mathbb{C}TF}}
\newcommand{\ctf}{\operatorname{\mathbb{C}TF}}
\DeclareMathOperator{\sgn}{sgn}
\DeclareMathOperator{\diag}{diag}
\DeclareMathOperator{\range}{Ran}
\DeclareMathOperator{\tol}{tol}
\DeclareMathOperator{\error}{error}
\DeclareMathOperator{\tolerence}{tolerence}
\DeclareMathOperator{\midd}{mid}
\numberwithin{equation}{section}
\begin{document}
\title[Image Inpainting Using Complex Tight Framelets]{Image Inpainting Using Directional Tensor Product Complex Tight Framelets}

\thanks{Research was supported in part by NSERC Canada. Research of Yi Shen was also supported
by an PIMS postdoctoral fellowship.}

\author{Yi Shen}

\address{Department of Mathematics and Science, Zhejiang Sci--Tech University, Hangzhou, China 310028. Also associated with Department of Mathematical and Statistical Sciences,
University of Alberta, Edmonton, Alberta T6G 2G1, Canada, and Department of Mathematics and Statistics, University of Calgary, Calgary, Alberta T2N 1N4, Canada. \quad E-mail: {\tt sy1133@gmail.com}}

\author{Bin Han}

\address{Department of Mathematical and Statistical Sciences,
University of Alberta, Edmonton, Alberta T6G 2G1, Canada.
\quad E-mail: {\tt bhan@ualberta.ca}\quad
{\tt http://www.ualberta.ca/$\sim$bhan}}

\author{Elena Braverman}

\address{Department of Mathematics and Statistics, University of Calgary, 2500 University Drive N.W., Calgary, Alberta T2N 1N4, Canada.\quad
E-mail: {\tt maelena@math.ucalgary.ca}}

\makeatletter \@addtoreset{equation}{section} \makeatother

\begin{abstract}
Different from an orthonormal basis, a tight frame is an overcomplete and energy-preserving system with flexibility and redundancy.
Many image restoration methods employing tight frames have been developed and studied in the literature. Tight wavelet frames have been proven to be useful in many applications.
In this paper we are particularly interested in the image inpainting problem using directional complex tight wavelet frames.
Under the assumption that frame coefficients of images are sparse,
several iterative thresholding algorithms for the image inpainting problem have been proposed in the literature. The outputs of such iterative algorithms are closely linked to solutions of several convex minimization models using the balanced approach which simultaneously combines the $l_1$-regularization for sparsity of frame coefficients and the $l_2$-regularization for smoothness of the solution.
Due to the redundancy of a tight frame, elements of a tight frame could be highly correlated and therefore, their corresponding frame coefficients of an image are
expected to close to each other. This is called the grouping effect in statistics. In this paper, we establish the grouping effect property for frame-based convex minimization models using the balanced approach. This result on grouping effect partially explains the effectiveness of models using the balanced approach for several image restoration problems. Since real-world natural images usually have two layers consisting of cartoons and textures, methods using simultaneous cartoon and texture inpainting are popular in the literature by using two combined tight frames: one tight frame (often built from wavelets, curvelets or shearlets) provides sparse representations for cartoons, and the other tight frame (often built from discrete cosine transform) offers sparse approximation for textures. Inspired by recent development on directional tensor product complex tight framelets ($\tpctf$s) and their impressive performance for the image denoising problem, in this paper we propose an iterative thresholding algorithm using a single tight frame derived from $\tpctf$s for the image inpainting problem.
Experimental results show that our proposed algorithm can handle well both cartoons and textures simultaneously and performs comparably and often better than several well-known frame-based iterative thresholding algorithms for the image inpainting problem without noise. For the image inpainting problem with additive zero-mean i.i.d. Gaussian noise, our proposed algorithm using $\tpctf$s performs superior than other known state-of-the-art frame-based image inpainting algorithms.
%Moreover, except that the noise standard deviation and the inpainting mask are assumed to be given, our image inpainting algorithm is free of parameters.
\end{abstract}

\keywords{Image inpainting, tight frames and wavelet frames,
grouping effect, directional tensor product complex tight framelets, balanced approach, convex minimization}

\subjclass[2010]{94A08, 42C40, 42C15, 65F10, 90C25, 41A05} \maketitle
%94A08, 94A11, 65F10, 42C15, 42C40, 41A27, 41A05
\pagenumbering{arabic}

\section{Introduction and Motivations}
\label{sec:introduction}

Image restoration problems can be generally formulated as the following linear inverse problem:
\be \label{inverse}
\y=\cA \x+\n,
\ee
where $\y\in \R^d$ is a given observed image and the matrix $\cA\in \R^{d\times d}$ is often given, $\n$ is a noise vector, and $\x\in \R^d$ is the unknown ``true'' clean image to be restored/recovered. Note that for convenience of discussion, an image is often regarded as a column vector by stacking the columns of the image one by one. For the particular choice $\cA=I$ where $I$ is the identity matrix, \eqref{inverse} is the image denoising problem.
If $\cA$ is a diagonal matrix with its diagonal entries being either $1$ or $0$, it corresponds to the image inpainting problem. Many methods have been proposed to solve the linear inverse problem in \eqref{inverse}. Among them, one popular approach is to employ sparse representations and convex minimization schemes with regularization (\cite{cai2008framelet,cai2010simultaneous,cai2009split,Chan2005,elad:book,elad2005simultaneous,li2013adaptive,lim2013nonseparable} and many references therein).
Roughly speaking, one expects that the unknown clean image $\x$ in \eqref{inverse} has a sparse representation under a basis, or a frame, or more generally a dictionary. Then one hopes to recover the unknown image $\x$ by finding the few dominating large coefficients of $\x$ in the transform domain from \eqref{inverse} by convex minimization schemes with some sparsity constraints.
Due to the energy-preserving property and computational efficiency, orthonormal bases and tight frames are often used. For example, the orthonormal basis in the discrete cosine transform (DCT) has many applications in image processing and has been known to be effective for sparsely approximating texture part of an image. Tight frames built from wavelets (called tight framelets in this paper) or from curvelets/shearlets are claimed to provide sparse approximations for natural images and they are %known to be
particularly attractive for capturing the cartoon part of an image (\cite{cai2008framelet,candes2004new,HanACHA2012,HanZhaoSIIMS2014,KL,lim2013nonseparable}).

A tight frame generalizes an orthonormal basis by allowing redundancy.
Comparing with an orthonormal basis, a tight frame is an overcomplete system but keeps the energy-preserving property of an orthonormal basis.
Let $(\cH, \la \cdot, \cdot\ra)$ be a Hilbert space and $\mathcal{J}$ be an at most countable index set. A set
$\{\ws_j \in \cH \setsp j\in \mathcal{J} \}$ of elements in $\cH$ is called a tight frame for $\cH$ if the following energy-preserving identity holds:
\[
\la f, f\ra=\sum_{j\in \mathcal{J}} |\la f, \ws_j\ra|^2, \quad \mbox{for all} \quad f\in \cH.
\]
It follows directly from the above identity that for every $f\in \cH$, $f=\sum_{j\in \mathcal{J}} \la f, \ws_j\ra \ws_j$
with the series converging unconditionally in $\cH$.
In this paper we mainly deal with digital images which are finite dimensional.
Therefore, we are mainly interested in tight frames for the finite dimensional Euclidean space $\R^d$. Let $\ws_1, \ldots, \ws_n$ be column vectors in $\R^d$. Define a $d\times n$ synthesis/reconstruction matrix $\ws:=[\ws_1, \ldots, \ws_n]$, that is, the $j$th column of the matrix $\ws$ is the vector $\ws_j$.
Then $\{\ws_1, \ldots, \ws_n\}$ is a tight frame for $\R^d$ if and only if $\ws \ws^\tp=I$, where $\ws^\tp$ is the transpose of the matrix $\ws$. Hence, for simplicity, we say that a $d\times n$ real-valued matrix $\ws$ is a tight frame if $\ws \ws^\tp =I$.
The column vectors $\ws_j$ of $\ws$ are called frame elements.
If $n=d$, then $\ws$ is a square matrix and a tight frame $\{\ws_1, \ldots, \ws_n\}$ with $n=d$ for $\R^d$ is simply an orthonormal basis for $\R^d$. We define the redundancy rate of a tight frame $\ws\in \R^{d\times n}$ to be the ratio $n/d$, since it transforms a signal $\x\in \R^d$ of length $d$ into a coefficient vector $\ws^\tp \x\in \R^n$ of length $n$. Since $n=d$ for an orthonormal basis $\ws$, it has the redundancy rate $1$, that is, it has no redundancy at all.
Since we will deal with complex tight framelets in this paper, it is worth pointing out the following trivial fact: If $\{\ws_1, \ldots, \ws_n\}\subseteq \mathbb{C}^d$ is a tight frame for $\mathbb{C}^d$, then $\{\ws_1^{[r]}, \ldots, \ws_n^{[r]}, \ws_1^{[i]}, \ldots, \ws_n^{[i]}\}$ is a tight frame for $\R^d$, where $h^{[r]}$ and $h^{[i]}$ in $\R^d$ are the real and imaginary parts of a vector $h\in \mathbb{C}^d$ satisfying $h=h^{[r]}+ih^{[i]}$ with $i$ being the imaginary unit.

The inner product of two vectors $\x=(x_1, \ldots, x_d)^\tp, \y=(y_1, \ldots, y_d)^\tp \in \C^d$ is defined to be
$\la \x, \y  \ra = \sum_{j=1}^d x_j \overline{y_j}$, where $\overline{y_j}$ is the complex conjugate of the complex number $y_j\in \C$.
Since we often deal with $\x, \y\in \R^d$, $\la \x, \y  \ra$ is simply $\sum_{j=1}^d x_j y_j$.
For $0<p<\infty$, the $l_p$-norm of $\x=(x_1, \ldots, x_d)^\tp\in \R^d$ is defined by
$\|\x\|_p:=\left(\sum_{j=1}^d |x_j|^p\right)^{1/p}$.
Moreover, its $l_\infty$-norm is defined by $\|\x\|_\infty:=\max_{1\le j\le d} |x_j|$ and its $l_0$-norm
$\|\x\|_0$ is simply the number of nonzero numbers in $x_1, \ldots, x_d$. Note that $\|\cdot\|_p$ is only a quasi-norm for $0<p<1$.
If a $d\times n$ matrix $\ws$ is a tight frame, for any $\x=(x_1, \ldots, x_d)^\tp \in \R^d$, by $\ws \ws^\tp=I$, we have
\[
\x=\ws\ws^\tp \x=\ws \mc=\sum_{j=1}^n c_j \ws_j \quad \mbox{with}\quad
\mc=(c_1, \ldots, c_n)^\tp:=\ws^\tp \x.
\]
That is, $c_j=\la \x, \ws_j\ra, j=1, \ldots, n$ are the frame coefficients of $\x$ under a tight frame $\ws$. Thus, $\ws$ is the synthesis/reconstruction matrix/operator, while $\ws^\tp$ serves as the analysis/decomposition matrix/operator.
An image $\x$ often has sparse frame coefficients under some tight frames, i.e., $\|\mc\|_p$ is small for $0\le p\le 1$.

Let $\Omega\subseteq \{1, \ldots, d\}$ be a nonempty given observable region. Define a $d\times d$ diagonal matrix $\cP$ by
\be \label{proj}
[\cP]_{j,k}=
\begin{cases} 1, &\text{$j=k$ with $j\in \Omega$,}\\
0, &\text{$j=k$ with $j\not \in \Omega$ or $j\ne k$,}
\end{cases}\qquad j,k=1, \ldots, d.
\ee
Quite often, the observation $\y=(y_1, \ldots, y_d)^\tp$ in \eqref{inverse} is only available on a subset $\Omega$ of $\{1, \ldots, d\}$, while $y_j, j\not\in \Omega$ are missing and not available. The set $\{1, \ldots, d\}\backslash\Omega$ is often called an inpainting mask since data on this set is masked and not observable.
From now on, $\ws_j$ always denotes the $j$th column of a $d\times n$ matrix $\ws$ for $j=1, \ldots, n$. We always assume that $\Omega$ is a nonempty subset of $\{1, \ldots, d\}$ representing a given observable region of an image and $\cP$ is the $d\times d$ diagonal matrix defined in \eqref{proj}.
The underlying idea is that the unknown ``true'' clean image $\x$ in \eqref{inverse} has a sparse representation under a given tight frame $\ws$. More precisely, the frame coefficient vector $\mc=\ws^\tp \x$ of $\x$ is sparse, that is, $\|\mc\|_0$ is small. Since $l_0$-norm is not convex, one often uses the $l_1$-norm instead and expects that $\|\mc\|_1$ is small for the frame coefficient vector $\mc$ of a natural image $\x$. Now the solution $\x$ to \eqref{inverse} may be recovered through a convex minimization scheme with sparse constraints on its frame coefficients.
Explicitly,
given an observed image $\cP \y$ and an observation matrix $\cP \cA$, we define
\be \label{Bb}
\cB:=\cP \cA \quad \mbox{and}\quad \mb:=\cP \y.
\ee
Representing the unknown clean image $\x\in \R^d$ in the transform domain using a tight frame $\ws$ as $\x=\ws \mc$ with $\mc\in \R^n$, we see that the linear inverse/regression problem in \eqref{inverse} in the transform domain becomes
\be \label{inverse:2}
\mb=\cB \ws \mc +\n.
\ee
The following model has been proposed in the literature \cite{cai2008framelet} to solve the linear inverse/regression problem in \eqref{inverse} through \eqref{inverse:2} to restore the unknown clean image $\x$ as follows:
\be\label{balanced}
\min_{\mc\in\R^n} \frac{1}{2}\| \cB \ws \mc - \mb \|_2^2 + \lambda\| \mc \|_1
+ \kappa\|(I - \ws^\tp \ws)\mc\|_2^2 ,
\ee
where both $\lambda$ and $\kappa$ are nonnegative regularization parameters. Let $\hmc$ be a minimization solution to \eqref{balanced}, that is, $\hmc$ is a minimizer to \eqref{balanced}. Then a reconstructed image
$\hmx:=\ws \hmc$
is regraded as the recovered/restored image of the unknown clean image $\x$.
Model \eqref{balanced} is called the balanced approach in \cite{cai2008framelet}. A brief introduction on several frame-based iterative thresholding image inpainting algorithms using the balanced approach will be provided in Section~\ref{sec:related}. If $\kappa =0$, then the model \eqref{balanced} becomes the \emph{synthesis-based approach}:
\be\label{synthesis}
\min_{\mc\in\R^n} \frac{1}{2}\| \cB \ws \mc - \mb \|_2^2 + \lambda\| \mc \|_1.
\ee
If $\kappa = \infty$, then the last term in the model \eqref{balanced} must be equal to $0$, that is, $(I-\ws^\tp \ws)\mc=0$. Since $\ws \ws^\tp=I$, it is easy to see that $(I-\ws^\tp \ws)\mc=0$ if and only if $\mc \in \range(\ws^\tp):=\{\ws^\tp \x \setsp \x\in \R^d\}$. Moreover, the mapping $\R^d \rightarrow \range(\ws^\tp)$ with $\x\mapsto \ws^\tp \x$ is a bijection. Therefore, for $\kappa=\infty$, the balanced approach in \eqref{balanced} becomes the \emph{analysis-based approach}:
\be \label{analysis}
\min_{\mc \in \range(\ws^\tp)} \frac{1}{2} \| \cB \ws \mc - \mb \|_2^2 + \lambda \| \mc \|_1
=\min_{\x \in \R^d}
\frac{1}{2} \|  \cB \x - \mb \|_2^2 + \lambda \| \ws^\tp \x \|_1.
\ee
If $n=d$, then a tight frame $\ws\in \R^{d\times n}$ is a square matrix satisfying $\ws\ws^\tp=\ws^\tp \ws=I$ and hence the set of column vectors of $\ws$ forms an orthonormal basis for $\R^d$. For this particular case, the balanced approach,  the synthesis-based approach, and the analysis-based approach are the same.
However, these three approaches are indeed different when $n>d$ since $\ws$ is a redundant tight frame for $\R^d$. For more details on these three approaches, see \cite{cai2008framelet,cai2009split,elad2007analysis,shen2010wavelet} and many references therein.

Since a tight frame $\ws$ is a redundant system, frame elements in $\ws$ are often highly correlated and are close to each other. When two frame elements $\ws_j, \ws_k$ in a tight frame $\ws$ are close to each other, for $\x\in \R^d$, it is trivial to see that
$$
|\la \x, \ws_j\ra-\la \x, \ws_k\ra|\le \|\x\|_2 \|\ws_j-\ws_k\|_2.
$$
That is, their corresponding frame coefficients $\la \x, \ws_j\ra$ and $\la \x, \ws_k\ra$ are also close to each other. This is called the grouping effect in statistics. See Section~\ref{sec:ge} for more details. For effectiveness, it is important that the recovered coefficient vector $\hmc$ through the convex minimization scheme in \eqref{balanced} also enjoys this grouping effect property. We have the following result on grouping effect of the balanced approach in \eqref{balanced}.

\begin{theorem}\label{thm:ge}
Let $\ws$ be a $d\times n$ real-valued matrix (not necessarily satisfies the tight frame condition $\ws \ws^\tp =I$) and let
$\hmc = (\breve{c}_1,\ldots, \breve{c}_n)^\tp\in \R^n$ be a minimizer to the convex minimization problem in \eqref{balanced} with positive regularization parameters $\lambda$ and $\kappa$. Then
\be \label{ge:relation}
|\breve{c}_j - \breve{c}_k| \le  \Big\| (\ws_j - \ws_k)^\tp \Big(
(2I-\ws \ws^\tp)\ws \hmc-\tfrac{1}{2\kappa}\cB^\tp (\cB \ws \hmc-\mb)\Big)\Big\|_2
\ee
holds for all $j,k=1, \ldots, n.$
%The claim in \eqref{ge:relation} also holds true if $\kappa=\infty$ and $\ws$ is a tight frame.
\end{theorem}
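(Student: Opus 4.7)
My approach would be based on the subgradient first-order optimality condition for the convex objective
\[
F(\mc) := \tfrac{1}{2}\|\cB\ws\mc-\mb\|_2^2 + \lambda\|\mc\|_1 + \kappa\|(I-\ws^\tp\ws)\mc\|_2^2.
\]
Since $I-\ws^\tp\ws$ is symmetric, the smooth part of $F$ has gradient $\ws^\tp\cB^\tp(\cB\ws\mc-\mb) + 2\kappa(I-\ws^\tp\ws)^2\mc$. Because $\hmc$ is a minimizer of $F$, there must exist scalars $s_m\in\partial|\breve{c}_m|$, $m=1,\ldots,n$, such that the coordinate-wise identity
\[
\ws_m^\tp\cB^\tp(\cB\ws\hmc-\mb) + 2\kappa\,[(I-\ws^\tp\ws)^2\hmc]_m + \lambda s_m = 0
\]
holds for every $m$, where $\partial|t|$ denotes the usual subdifferential of the absolute value.

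The next step is to subtract the $k$-th instance of this identity from the $j$-th. Expanding $(I-\ws^\tp\ws)^2 = I - 2\ws^\tp\ws + \ws^\tp\ws\ws^\tp\ws$ and extracting the $j$-th and $k$-th entries leads to the purely algebraic relation
\[
[(I-\ws^\tp\ws)^2\hmc]_j - [(I-\ws^\tp\ws)^2\hmc]_k = (\breve{c}_j-\breve{c}_k) - (\ws_j-\ws_k)^\tp(2I-\ws\ws^\tp)\ws\hmc.
\]
Substituting this back and solving for the difference $\breve{c}_j - \breve{c}_k$ yields
\[
\breve{c}_j-\breve{c}_k = A_{jk} - \frac{\lambda}{2\kappa}(s_j-s_k), \qquad A_{jk}:=(\ws_j-\ws_k)^\tp\Bigl[(2I-\ws\ws^\tp)\ws\hmc - \tfrac{1}{2\kappa}\cB^\tp(\cB\ws\hmc-\mb)\Bigr].
\]
Note that $A_{jk}$ is precisely the scalar whose absolute value appears on the right-hand side of \eqref{ge:relation}.

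The final and decisive step is to dispose of the $l_1$-subgradient correction $-\tfrac{\lambda}{2\kappa}(s_j-s_k)$. The key fact is that $\partial|\cdot|$ is a monotone operator on $\R$: for any choices $s_j\in\partial|\breve{c}_j|$ and $s_k\in\partial|\breve{c}_k|$, one has $(\breve{c}_j-\breve{c}_k)(s_j-s_k)\ge 0$. Rewriting the identity above as $A_{jk} = (\breve{c}_j-\breve{c}_k) + \tfrac{\lambda}{2\kappa}(s_j-s_k)$ and squaring, the cross term is nonnegative by monotonicity (using $\lambda,\kappa>0$) and the remaining $(s_j-s_k)^2$ term is manifestly nonnegative, so $A_{jk}^2\ge(\breve{c}_j-\breve{c}_k)^2$. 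Taking square roots yields $|\breve{c}_j-\breve{c}_k|\le|A_{jk}|$, and observing that the $\|\cdot\|_2$ in the statement is nothing but the absolute value of the scalar $A_{jk}$ concludes the argument. I expect the main obstacle to be recognizing that monotonicity of $\partial|\cdot|$ is exactly what is needed to absorb the sign-indeterminate subgradient contribution uniformly, with no case analysis on whether $\breve{c}_j$ and $\breve{c}_k$ share a common sign.
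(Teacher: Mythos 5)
Your proposal is correct, and its key step differs from the paper's proof in a way worth noting. The stationarity identity you extract from $0\in\partial F(\hmc)$ is exactly what the paper derives by hand: perturbing the minimizer along $\pm\gep e_j$ yields \eqref{sgn1} when $\breve{c}_j\ne 0$ and \eqref{eq3} when $\breve{c}_j=0$, which are precisely the coordinate conditions $[\nabla(\text{smooth part})(\hmc)]_j+\lambda s_j=0$ with $s_j\in\partial|\breve{c}_j|$; and your algebraic identity $[(I-\ws^\tp\ws)^2\hmc]_j=\breve{c}_j-\ws_j^\tp(2I-\ws\ws^\tp)\ws\hmc$ is the paper's \eqref{eq2}. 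Where you genuinely diverge is afterwards: the paper proves \eqref{ge:relation} by a four-case analysis on the signs of $\breve{c}_j$ and $\breve{c}_k$ (same sign, opposite signs, one zero, both zero), each case handled with a separate sign or triangle-inequality argument, whereas you absorb the subgradient correction in one stroke by writing $A_{jk}=(\breve{c}_j-\breve{c}_k)+\tfrac{\lambda}{2\kappa}(s_j-s_k)$ and invoking monotonicity of $\partial|\cdot|$, i.e.\ $(\breve{c}_j-\breve{c}_k)(s_j-s_k)\ge 0$, so that squaring gives $A_{jk}^2\ge(\breve{c}_j-\breve{c}_k)^2$ with no case distinctions. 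Both arguments are sound; the paper's buys self-containedness (only elementary perturbation estimates, no convex-analysis vocabulary), while yours is shorter, uniform, and extends immediately to the weighted model \eqref{balanced:2} of Theorem~\ref{thm:ge:2}: writing $\lambda_js_j-\lambda_ks_k=\lambda_j(s_j-s_k)+(\lambda_j-\lambda_k)s_k$ and using $|s_k|\le 1$ reproduces the extra $\tfrac{1}{2\kappa}|\lambda_j-\lambda_k|$ term in \eqref{ge:relation:2} without redoing any cases. One small presentational point: it is worth stating explicitly, as you do at the end, that $(\ws_j-\ws_k)^\tp\bigl((2I-\ws\ws^\tp)\ws\hmc-\tfrac{1}{2\kappa}\cB^\tp(\cB\ws\hmc-\mb)\bigr)$ is a scalar, so the $\|\cdot\|_2$ in \eqref{ge:relation} is just its absolute value and your bound is exactly the claimed one.
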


A proof of Theorem~\ref{thm:ge} will be given in
Section~\ref{sec:ge}.
The grouping effect property in Theorem~\ref{thm:ge} partially explains the effectiveness of the balanced approach \eqref{balanced}
%and the analysis-based approach \eqref{analysis}
for several image restoration problems in \cite{cai2008framelet,cai2010simultaneous,cai2009split}.
We shall further generalize Theorem~\ref{thm:ge} in Section~\ref{sec:ge} by replacing the term $\lambda \|\mc\|_1$ for sparsity in \eqref{balanced} with a more general term
$\|\diag(\lambda_1, \ldots, \lambda_n) \mc\|_1$.
Such a modified model for the balanced approach in \eqref{balanced} using a nonuniform weight vector $\diag(\lambda_1, \ldots, \lambda_n)$ instead of a uniform weight $\lambda$ has been widely used in image processing, see Section~\ref{sec:related} for more details.

For image inpainting, the matrix $\cA=I$ and the observed image $\y=(y_1, \ldots, y_d)^\tp\in \R^d$ is given by
\be \label{iip}
y_j=\begin{cases} x_j  + n_j, &\text{$j\in \Omega$,}\\
\mbox{arbitrary}, &  j\in \{ 1,\ldots,d\}\backslash \Omega,
\end{cases}
\ee
where $\Omega$ is a given observable region, $\x=(x_1, \ldots, x_d)^\tp \in \R^d$ is an unknown clean image to be restored, and $n_j$ is the noise term which is often the additive independent identically distributed (i.i.d.) zero-mean Gaussian noise.
The goal of image inpainting is to recover the missing pixels of $\x$ in the missing region $\{1, \ldots, d\}\backslash \Omega$ (i.e., the inpainting mask) while to recover the pixels of $\x$ on $\Omega$ by suppressing the noise of $\y$ in the observable region $\Omega$.
Many techniques have been proposed in the literature to address the image inpainting problem, for example, see \cite{Bertalmio2014,Bertalmio2003,cai2008framelet,cai2010simultaneous,cai2009split,Chan2005,elad:book,elad2005simultaneous,li2013adaptive,lim2013nonseparable} and many references therein.
According to the survey article \cite{Bertalmio2014}, methods for image inpainting can be classified into three groups: patch-based methods, PDEs/variational methods,
and sparse representation based methods. We shall briefly review several image inpainting methods in Section~\ref{sec:related}.
The readers can see  \cite{Bertalmio2014} and the books \cite{Chan2005,elad:book}
for a detailed list of references for many different methods to study
the image inpainting problem. In this paper, we concentrate on frame-based iterative image inpainting algorithms which belong to sparse representation based methods.

To solve the image inpainting problem in \eqref{iip} using convex minimization schemes as in \eqref{balanced}, many frame-based iterative thresholding algorithms have been proposed in the literature (\cite{cai2008framelet,cai2010simultaneous,cai2009split,elad2005simultaneous,li2013adaptive,lim2013nonseparable}).
Since real-world natural images are known to consist of two layers: cartoons and textures,
to improve the performance of an iterative thresholding algorithm for the image inpainting problem, quite often one uses two tight frames $\ws_c$ and $\ws_t$ instead of a single tight frame $\ws$ in \eqref{balanced}. More specifically, a tight frame $\ws_c$ is used so that the cartoon part of an image has a sparse representation under the tight frame $\ws_c$, while the texture part of an image has a good sparse approximation under the tight frame $\ws_t$. This is called simultaneous cartoon and texture inpainting in the literature. For more details on this and other related models, see Section~\ref{sec:related} and \cite{cai2008framelet,cai2010simultaneous,cai2009split,elad2005simultaneous,li2013adaptive,lim2013nonseparable}.
The commonly used tight frames $\ws_c$ for cartoons are tight framelets (that is, tight frames built from wavelet filters), curvelets, or shearlets etc. For textures, the tight frame derived from discrete cosine transform (DCT) is one of the most popular choices for $\ws_t$. Though the approach of simultaneous cartoon and texture inpainting is natural and interesting, as pointed out in \cite{li2013adaptive}, there are several unresolved issues about this approach. In particular, there is no precise mathematical definitions of cartoon and texture components of an image and therefore, it is difficult in terms of mathematics in separating an image exactly into its cartoon part and texture part. Moreover, there is currently no automatic way of determining the regularization parameters in such an approach and the performance of such image inpainting algorithms largely depends on the choices of the regularization parameters (see \cite{cai2008framelet,cai2010simultaneous,cai2009split,elad2005simultaneous}).

Inspired by recent development on directional tensor product complex tight framelets ($\tpctf$s) and their impressive performance for the image denoising problem in \cite{HanMMNP2013,HanZhaoSIIMS2014}, in this paper we propose an iterative thresholding algorithm using a single tight frame derived from $\tpctf$s for the image inpainting problem. Details on directional tensor product complex tight framelets $\tpctf$s will be provided in Section~\ref{sec:tpctf}.
Basically, our iterative thresholding algorithm is similar to many frame-based image inpainting algorithms and takes the following form: Let $\ws\in \R^{d\times n}$ be the tight frame built from the tensor product complex tight framelet $\tpctf_6$ and $\cP$ be the projection operator in \eqref{proj} with a given observable region $\Omega$. Given a corrupted image $\y=(y_1, \ldots, y_d)^\tp$ on the observable region $\Omega$ as in \eqref{iip}.
Assume that the noise standard deviation $\sigma$ of additive zero-mean i.i.d. Gaussian noise in \eqref{iip} is known. The following iterative algorithm (see Algorithm~\ref{mainalg} for full details) is proposed in this paper for image inpainting:

\begin{algorithm*}[htb]
\begin{algorithmic}[1]
\STATE Initialization: $\x_0 = 0$, $\lambda = \lambda_0$, $\ell=0$.
\WHILE{not convergent}
\STATE  $\mc_{\ell+1} = \mbox{Thresholding}_{\lambda} (\ws^\tp (\cP \y  + (I-\cP)\x_\ell))  $.  \label{step:thresholding}
\STATE  $\x_{\ell+1} = \ws\mc_{\ell+1}$.
\STATE  $\error = \| (I-\cP)(\x_{\ell+1} - \x_{\ell}) \|_2/\|  \cP \y\|_2$.
\IF    { $\error < \tolerence $}
\STATE  Update the thresholding value $\lambda$. \label{step:lambda}
\ENDIF
\STATE  $\ell=\ell+1$.
\ENDWHILE
\RETURN  $\x_{\ell+1}$ as the restored image.
\end{algorithmic}
\end{algorithm*}

\begin{figure}[th]
\centering
\subfigure[Barbara]{\includegraphics[width=2.5cm]{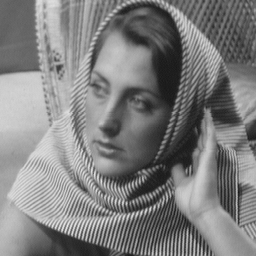}}
\subfigure[C.man]{\includegraphics[width=2.5cm]{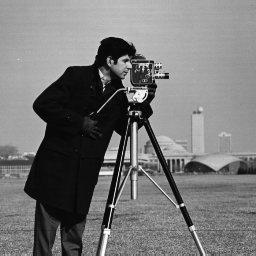}}
\subfigure[Lena]{\includegraphics[width=2.5cm]{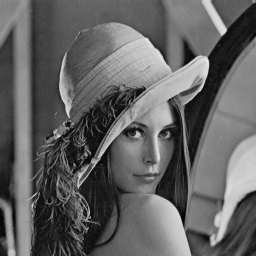}}
\subfigure[House]{\includegraphics[width=2.5cm]{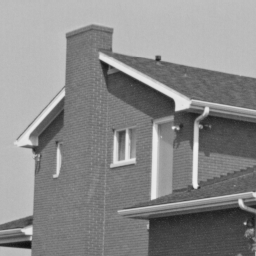}}
\subfigure[Peppers]{\includegraphics[width=2.5cm]{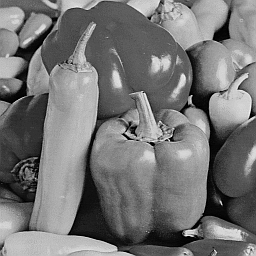}}\\
\subfigure[Hill]{\includegraphics[width=2.5cm]{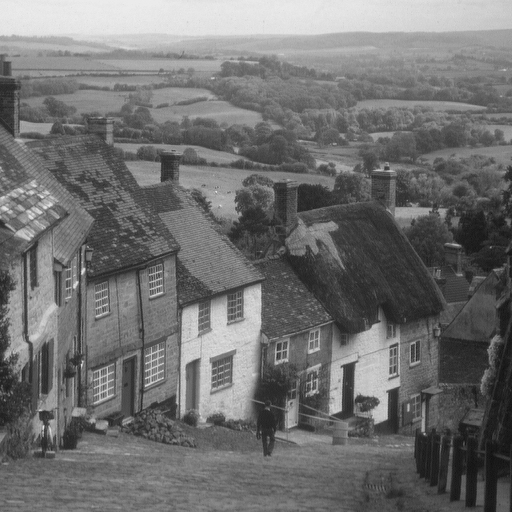}}
\subfigure[Man]{\includegraphics[width=2.5cm]{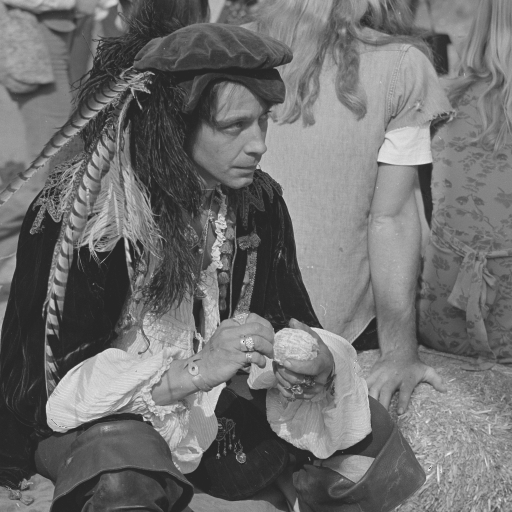}}
\subfigure[Boat]{\includegraphics[width=2.5cm]{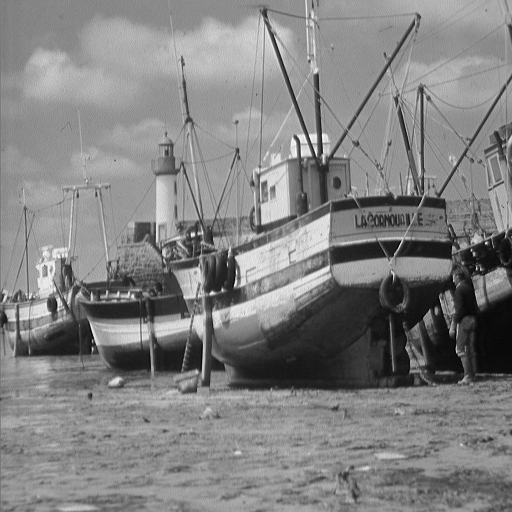}}
\subfigure[Barbara]{\includegraphics[width=2.5cm]{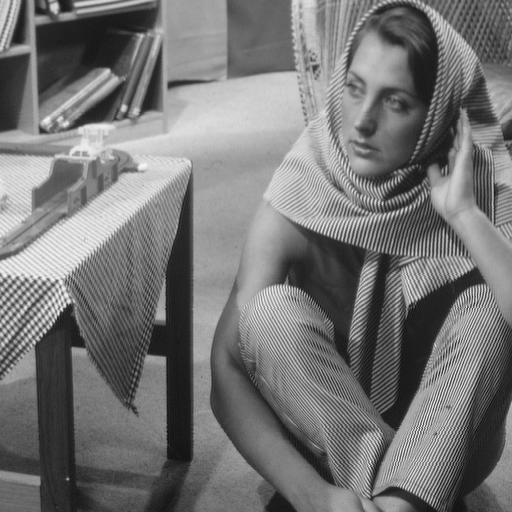}}
\subfigure[Mandrill]{\includegraphics[width=2.5cm]{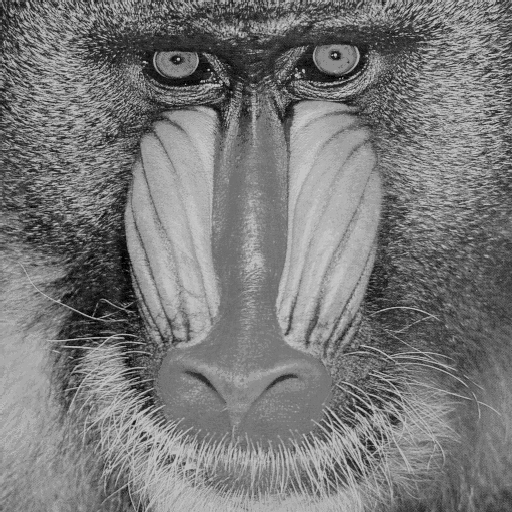}}\\
\subfigure[Text $1$]{\includegraphics[width=2.5cm]{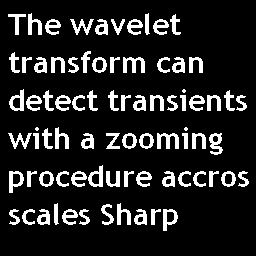}}
\subfigure[Text $2$]{\includegraphics[width=2.5cm]{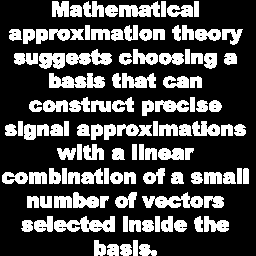}}
\subfigure[Text $3$]{\includegraphics[width=2.5cm]{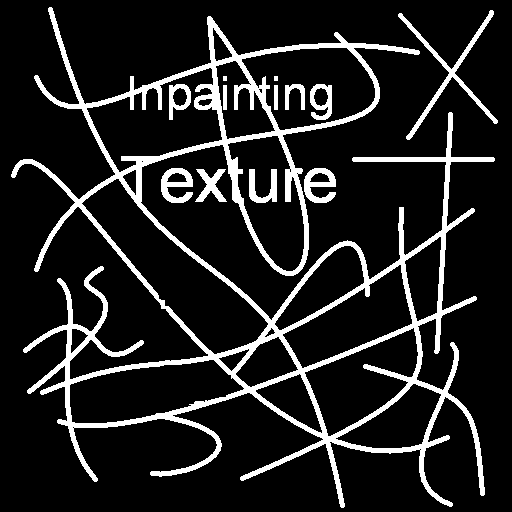}}
\subfigure[Text $4$]{\includegraphics[width=2.5cm]{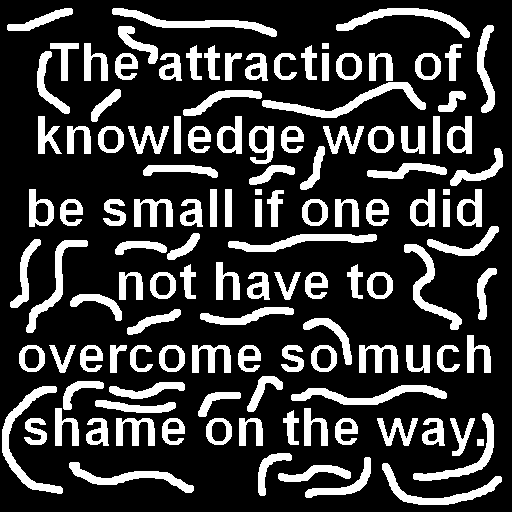}}
\caption{(a)-(e) are test images of size $256\times 256$. (f)-(j) are test images of size $512\times 512$. (k) and (l) are inpainting masks of size $256\times 256$.
(m) and (n) are inpainting masks of size $512\times 512$. The observable region $\Omega$ is just the complement of an inpainting mask.}\label{testimages}
\end{figure}

To illustrate the performance of our proposed image inpainting algorithm and to compare it with several state-of-the-art frame-based image inpainting algorithms, we provide here two tables of experimental results.
Table~\ref{tab:nonoise} presents the numerical results in terms of PSNR (Peak Signal-to-Noise Ratio) values for the image inpainting problem without noise (that is, $\sigma=0$). We compare the performance of our proposed algorithm with several frame-based iterative image inpainting algorithms including spline tight framelets
%(see \cite{RonShenJFA1997})
based image inpainting algorithm in \cite{cai2008framelet}, simultaneous cartoon and texture inpainting by combining spline tight framelets and local DCT in \cite{cai2010simultaneous,cai2009split},
morphological component analysis (MCA) based simultaneous cartoon and texture image inpainting in \cite{elad2005simultaneous},
adaptive inpainting algorithm based on undecimated transform using the DCT-Haar wavelet filters in \cite{li2013adaptive}, image inpainting algorithm based on undecimated transform using compactly supported nonseparable shearlets in \cite{lim2013nonseparable}.
The implementation of all those related frame-based image inpainting algorithms are kindly provided by their own authors or downloaded from their homepages. We run all the related image inpainting algorithms with their default parameter values which have been given in the source codes by their authors.
From Table~\ref{tab:nonoise} for the image inpainting problem without noise (that is, the noise standard deviation $\sigma=0$), we see that our proposed image inpainting algorithm often performs better than the well-known frame-based iterative image inpainting algorithms in \cite{cai2008framelet,cai2010simultaneous,cai2009split,elad2005simultaneous,li2013adaptive,lim2013nonseparable}.

\begin{table}[htbp]
\centering
\caption{
Performance in terms of PSNR values of several iterative image inpainting algorithms without noise for five $256\times 256$ test images in (a)-(e) of Figure~\ref{testimages}.
\cite{cai2008framelet} uses undecimated spline tight framelets, \cite{cai2010simultaneous,cai2009split} uses both undecimated spline tight framelets and local DCT for simultaneous cartoon and texture inpainting, \cite{elad2005simultaneous} uses MCA-based simultaneous cartoon and texture image inpainting, \cite{li2013adaptive} uses undecimated DCT-Haar wavelet filters, and \cite{lim2013nonseparable} uses undecimated compactly supported nonseparable shearlets.
Top left table for inpainting
mask Text~1 in (k) of Figure~\ref{testimages}.
Top right table for inpainting
mask Text~2 in (l) of Figure~\ref{testimages}.
Bottom left table for
$50\%$ randomly missing pixels.
Bottom right table for
$80\%$ randomly missing pixels.
Gain refers to the PSNR gain of our proposed Algorithm~\ref{mainalg} in row 7 over the highest PSNR values (underlined) among all the PSNR values in the first 6 rows obtained by the $6$ inpainting algorithms in \cite{cai2008framelet,cai2010simultaneous,cai2009split,elad2005simultaneous,li2013adaptive,lim2013nonseparable}.}

\begin{tabular}{|c|c|c|c|c|c||c|c|c|c|c|}
\hline
&Barbara &C.man &Lena &House &Peppers &Barbara &C.man &Lena &House &Peppers \\
\hline
\cite{cai2008framelet} & 32.82  & 30.28  & 32.11  & 37.08  & 31.41 & 27.77  & 26.75  & 28.32  & 31.01  & 27.10 \\
\cite{cai2010simultaneous} & 35.23  & 30.57  & 31.85  & 37.42  & 31.09 & 29.36  & 26.88  & 28.15  & 31.33  & 26.80 \\
\cite{cai2009split} & 34.98  & 30.48  & 31.91  & 37.54 & 31.10 & 29.01  & 26.70  & 28.05  & 31.18  & 26.76\\
\cite{elad2005simultaneous} & 35.47  & 31.12  & 33.81  & 37.66 & 31.44 & 29.55  & 27.26  & 29.21  & 33.26  & 27.56\\
\cite{li2013adaptive}  & \underline{36.60}  & 31.99  & 33.82  & 35.76  & 32.27 & 30.46  & \underline{28.58}  & 29.50  & 32.45  & 28.05\\
\cite{lim2013nonseparable} & 36.09  & \underline{32.30}  &\underline{34.89}  & \underline{39.68}
&\underline{32.49}
& \underline{30.56}  &28.52  &\underline{30.26}  &\underline{33.65}  &\underline{29.17}\\
Alg. \ref{mainalg} &
    \textbf{38.43} & \textbf{32.77} & \textbf{35.06} & \textbf{39.73} & \textbf{33.23} & \textbf{32.16} & \textbf{28.85} & \textbf{30.73} & \textbf{34.76} & \textbf{29.23} \\
\hline
Gain&1.83  & 0.47  & 0.17  & 0.05  & 0.74  & 1.60  & 0.27  & 0.47  & 1.12  & 0.06  \\
\hline \hline

\cite{cai2008framelet} & 29.58  & 28.65  & 31.39  & 36.57  & 29.18  & 24.34  & 23.94  & 26.27  & 29.80  & 24.67  \\
    \cite{cai2010simultaneous} & 32.59  & 28.32  & 30.85  & 36.90  & 28.74  & 26.27  & 23.75  & 25.88  & 29.73  & 24.35  \\
      \cite{cai2009split} & 32.68  & 28.39  & 30.82  & 36.93  & 28.94  & 25.90  & 23.69  & 25.85  & 29.54  & 24.36  \\
    \cite{elad2005simultaneous} & 31.78  & 27.63  & 30.57  & 34.90  & 27.83  & 25.01  & 23.31  & 26.33  & 29.17  & 23.73  \\
    \cite{li2013adaptive}  &\underline{34.97}  &\underline{30.03}  & \underline{32.63}  &\underline{38.51}  &\underline{29.42}  &\underline{26.85}  & 23.84  & 26.26  & 30.16  & 24.12  \\
     \cite{lim2013nonseparable} & 33.33  & 29.18  & 32.24  & 37.26  & 29.34  & 26.71  &\underline{24.72}  &\underline{27.77}  &\underline{31.41}  &\underline{25.37}  \\
    Alg. \ref{mainalg} &\textbf{36.25} & \textbf{30.31} & \textbf{33.60} & \textbf{39.24} & \textbf{30.31} & \textbf{28.22} & \textbf{25.09} & \textbf{28.15} & \textbf{32.31} & \textbf{25.66} \\
    \hline
  Gain &  1.28  & 0.28  & 0.97  & 0.73  & 0.89  & 1.37  & 0.37  & 0.38  & 0.90  & 0.29 \\
     \hline
\end{tabular}%
\label{tab:nonoise}
\end{table}%

In Section~\ref{sec:related}, we shall provide details about the inpainting algorithms and their underlying tight frames given in \cite{cai2008framelet,cai2010simultaneous,cai2009split,elad2005simultaneous,li2013adaptive,lim2013nonseparable}.
Here we make some remarks about the experimental results in Table~\ref{tab:nonoise}.
The inpainting algorithms in \cite{cai2008framelet,li2013adaptive,lim2013nonseparable} and our proposed Algorithm~\ref{mainalg} are very similar to each other in that all of them are iterative thresholding algorithms using one tight frame.
The inpainting algorithms in \cite{cai2010simultaneous,cai2009split,elad2005simultaneous} use an additional tight frame for simultaneous cartoon and texture inpainting. Some details are as follows:

\begin{enumerate}
\item For \cite{cai2008framelet}, the tight frame is built from the tensor product of an undecimated real-valued spline tight framelet filter bank $\{a; b_1, \ldots, b_4\}$ and has redundancy rate $25$ with one decomposition level.

\item For \cite{li2013adaptive}, the tight frame is built from the undecimated DCT-Haar wavelet filter derived from the discrete cosine transform (DCT) with a block size $7\times 7$ and has redundancy rate $49$.

\item For \cite{lim2013nonseparable}, the (non-tight) frame is built from undecimated compactly supported nonseparable shearlets and has redundancy rate $49$ with $16, 16, 8, 8$ high-pass filters and one low-pass filter.

\item For our proposed Algorithm~\ref{mainalg}, the tight frame is built from tensor product complex-valued tight framelet filter banks in \cite{HanMMNP2013,HanZhaoSIIMS2014} and has redundancy rate $\frac{32}{3}\approx 10.67$ with $4$ decomposition levels.

\item For \cite{cai2010simultaneous,cai2009split},
the tight frame $\ws_c$ for the cartoon part is built from the tensor product of an undecimated real-valued spline tight framelet filter bank $\{a; b_1, b_2\}$, and the tight frame $\ws_t$ is from the $50\%$ overlapping discrete cosine transform.
The redundancy rate of $\ws_c$ is $9$ with one decomposition level and the redundancy rate of $\ws_t$ is $2$. Therefore, the total redundancy rate is $11$.

\item For \cite{elad2005simultaneous}, the tight frame $\ws_c$ for the cartoon part is built from curvelets and the tight frame $\ws_t$ is from the $50\%$ overlapping discrete cosine transform with a block size $32\times 32$. The redundancy rate of $\ws_c$ is $2.89$ and the redundancy rate of $\ws_t$ is $2$. Therefore, the total redundancy rate is $4.89$.
\end{enumerate}

For test images mainly having edges and few textures, \cite{cai2008framelet} using spline tight framelets often tends to be slightly better than \cite{li2013adaptive}, while \cite{lim2013nonseparable} using compactly supported nonseparable shearlets is better than \cite{li2013adaptive}.
However, for texture-rich images and for inpainting masks with low percentages of randomly missing pixels, \cite{li2013adaptive} using the DCT-Haar filter is often better than both \cite{cai2008framelet} and \cite{li2013adaptive}. By using an additional tight frame for textures, \cite{cai2010simultaneous,cai2009split} often improve
\cite{cai2008framelet} for texture-rich images, but tends to lose performance for images having few textures.
Except using an extra total-variation (TV) regularization term,
the algorithm in \cite{elad2005simultaneous} is similar to those in \cite{cai2010simultaneous,cai2009split} and tends to perform better than \cite{cai2010simultaneous,cai2009split}.
Our Algorithm~\ref{mainalg} using complex-valued tight framelets in \cite{HanMMNP2013,HanZhaoSIIMS2014} often outperforms
\cite{cai2008framelet,cai2010simultaneous,cai2009split,elad2005simultaneous,li2013adaptive,lim2013nonseparable}.
The comparison of averaged performance in terms of averaged PSNR values is presented in Table~\ref{tab:average256}.

% Table generated by Excel2LaTeX from sheet 'Sheet2'
\begin{table}[htbp]
  \centering
  \caption{Each (averaged) PSNR value here is the average of the PSNR values in Table~\ref{tab:nonoise} for the five $256\times 256$ test images for each inpainting algorithm. (Averaged) Gain here refers to the gain of the averaged PSNR value of our Algorithm~\ref{mainalg} over the averaged PSNR value of the corresponding inpainting algorithm.}  \label{tab:average256}%
    \begin{tabular}{|c|cc|cc|cc|cc|}
    \hline
      & Text 1  & Gain & Text 2 & Gain & 50\% Missing & Gain & 80\% Missing & Gain \\
    \hline

     \cite{cai2008framelet} & 32.74  & 3.10  & 28.19  & 2.96  & 31.07  & 2.87  & 25.81  & 2.07  \\
    \cite{cai2010simultaneous} & 33.23  & 2.61  & 28.50  & 2.65  & 31.48  & 2.46  & 26.00  & 1.88 \\
      \cite{cai2009split} & 33.20  & 2.64  & 28.34  & 2.81  & 31.55  & 2.39  & 25.87  & 2.01  \\
    \cite{elad2005simultaneous} & 33.90  & 1.94  & 29.37  & 1.78  & 30.54  & 3.40  & 25.51  & 2.37  \\
    \cite{li2013adaptive}  & 34.09  & 1.75  & 29.81  & 1.34  & 33.11  & 0.83  & 26.24  & 1.64  \\
     \cite{lim2013nonseparable} & 35.09  & 0.75  & 30.43  & 0.72  & 32.27  & 1.67  & 27.19  & 0.69  \\
    Alg. \ref{mainalg} & 35.84  &   & 31.15  &   & 33.94  &   & 27.88  &  \\
    \hline
    \end{tabular}%
\end{table}%

Table~\ref{tab:noise} provides the numerical results in terms of PSNR values for the image inpainting problem with additive i.i.d. zero-mean Gaussian noise having noise standard deviation $\sigma$ ranging from $5$ to $50$.
%Note that for the image inpainting problem with noise, we do not compare the performance of our algorithm with other frame-based image inpainting algorithms due to several considerations.
For the image inpainting problem with noise (in particular, when the noise deviation $\sigma$ is relatively large), all other available frame-based image inpainting algorithms
in \cite{cai2008framelet,cai2010simultaneous,cai2009split,elad2005simultaneous,li2013adaptive,lim2013nonseparable}
with their default choices of parameter values often produce much inferior PSNR values than ours. This is probably largely due to the facts: (1) The image inpainting algorithms have been developed mainly for the noiseless case without considering noise (\cite{li2013adaptive,lim2013nonseparable}). (2) For image inpainting algorithms which take into account of noise or which are capable of handling noise,
for different levels of noise and different images, one has to manually tune several regularization parameters to achieve reasonable performance for the image inpainting problem with noise (\cite{cai2008framelet,cai2010simultaneous,cai2009split,elad2005simultaneous}). (3) Most known frame-based image inpainting algorithms may work well when the standard deviation $\sigma$ is very small but often fail when $\sigma$ is large, say $\sigma=30$, despite fine tuning of regularization parameters.
Except that the noise standard deviation $\sigma$ and the inpainting mask are assumed to be known (as assumed by all other frame-based image inpainting algorithms),
our proposed image inpainting algorithm does not require to manually tune parameters for reasonably good performance. To avoid potential unfair or misleading comparison with other available frame-based image inpainting algorithms for the image inpainting problem with noise, we only provide numerical results for our algorithm in Table~\ref{tab:noise}, which clearly shows that our image inpainting algorithm using $\tpctf$s performs superior for image inpainting with noise.
In Section~\ref{sec:experiment}, we shall provide all details on our proposed
Algorithm~\ref{mainalg} for image inpainting.
More experimental results on image inpainting with or without noise will be reported in Section~\ref{sec:experiment}. Comparison of our proposed image inpainting algorithm with the frame-based image inpainting algorithms in \cite{cai2008framelet,cai2010simultaneous,cai2009split,elad2005simultaneous,li2013adaptive,lim2013nonseparable}, in terms of both PSNR values and visual effects, will be given in Section~\ref{sec:experiment}.

\begin{table}[hbtp]
\centering
\caption{Performance in terms of PSNR values of our proposed image inpainting algorithm
under i.i.d. zero-mean Gaussian noise with noise standard deviation $\sigma=5, 10, 20, 30, 50$
for five $256\times 256$ test images in (a)-(e) of Figure~\ref{testimages}.
Top left table for inpainting mask Text~1 in (k) of Figure~\ref{testimages}.
Top right table for inpainting mask Text~2 in (l) of Figure~\ref{testimages}.
Bottom left table for $50\%$ randomly missing pixels.
Bottom right table for $80\%$ randomly missing pixels.
}
\begin{tabular}{|c|c|c|c|c|c||c|c|c|c|c|}
    \hline
    $\sigma$  & Barbara & C.man & Lena & House & Peppers & Barbara & C.man & Lena & House & Peppers \\
       \hline
    $5$  & 34.95  & 31.52  & 33.12  & 36.18  & 31.44  & 30.78  & 28.31  & 29.88  & 33.25  & 28.45  \\
    $10$ & 32.32  & 29.97  & 31.32  & 33.85  & 29.73  & 29.20  & 27.39  & 28.77  & 31.74  & 27.52  \\
    $20$ & 29.16  & 27.73  & 28.85  & 31.33  & 27.57  & 27.10  & 26.01  & 27.08  & 29.57  & 25.96  \\
    $30$ & 27.23  & 26.25  & 27.14  & 29.64  & 26.10  & 25.68  & 24.89  & 25.87  & 28.11  & 24.76  \\
    $50$ & 24.92  & 24.31  & 25.03  & 27.31  & 24.13  & 23.89  & 23.16  & 24.21  & 26.18  & 23.04  \\
    \hline\hline
    $5$  & 33.60  & 29.52  & 32.06  & 35.73  & 29.34  & 27.68  & 24.82  & 27.62  & 31.26  & 25.18  \\
    $10$ & 31.03  & 28.41  & 30.40  & 33.16  & 28.27  & 26.54  & 24.35  & 26.75  & 29.91  & 24.61  \\
    $20$ & 27.76  & 26.58  & 28.01  & 30.43  & 26.44  & 24.50  & 23.57  & 25.29  & 27.74  & 23.56  \\
    $30$ & 25.74  & 25.33  & 26.42  & 28.59  & 25.08  & 23.22  & 22.80  & 24.07  & 26.08  & 22.65  \\
    $50$ & 23.63  & 23.55  & 24.33  & 26.28  & 23.12  & 21.51  & 21.39  & 22.29  & 23.89  & 21.06  \\
     \hline
\end{tabular}%
\label{tab:noise}
\end{table}%

The structure of the paper is as follows. In Section~\ref{sec:ge} we shall prove and generalize Theorem~\ref{thm:ge} for the grouping effect property of the balanced approach in \eqref{balanced}.
We shall also briefly review the related elastic net in statistics and its grouping effect which inspires us to investigate the grouping effect of the balanced approach in this paper.
In Section~\ref{sec:related}, we shall review several related frame-based image inpainting algorithms including
spline tight framelet image inpainting in \cite{cai2008framelet}, simultaneous cartoon and texture image inpainting in \cite{cai2010simultaneous,cai2009split,elad2005simultaneous}, adaptive image inpainting using the DCT-Haar wavelet filters in \cite{li2013adaptive}, and image inpainting using compactly supported nonseparable shearlets in \cite{lim2013nonseparable}.
In Section~\ref{sec:tpctf} we shall outline the construction of directional tensor product complex tight framelets which have been initially introduced in \cite{HanMMNP2013} and fully developed in \cite{HanZhaoSIIMS2014}.
In Section~\ref{sec:experiment} we shall provide full details on our proposed iterative thresholding image inpainting algorithm using directional complex tight framelets. Detailed experimental results on the image inpainting problem with or without noise will be reported in Section~\ref{sec:experiment}.

\section{Grouping Effect of the Balanced Approach}
\label{sec:ge}

In this section we shall prove and generalize Theorem~\ref{thm:ge} for the grouping effect property of the balanced approach in \eqref{balanced}. Before presenting the proof and generalization to Theorem~\ref{thm:ge}, we briefly recall the elastic net in statistics and its grouping effect established in \cite{zhou2013grouping,zou2005regularization}.

Consider the linear inverse/regression model $\mb=E \mc+\n$, which is essentially the model \eqref{inverse:2} with $E=\cB \ws$.
The main task is to recover the unknown vector $\mc\in \R^n$ from the given observation $\mb \in \R^d$ and the given observation matrix $E \in\R^{d\times n}$.
Introduced in \cite{zou2005regularization},
the elastic net is a convex minimization scheme to achieve this goal and is given by
\be\label{elastic}
\min_{\mc\in\R^n} \frac{1}{2}\| E \mc - \mb \|_2^2
+ \lambda \|\mc \|_1 +\kappa \|\mc\|_2^2,
\ee
where $\lambda$ and $\kappa$ are positive regularization parameters.
%Let $E_k=(E_{1,k}, \ldots, E_{d,k})^\tp \in \R^d$ be the $k$th column of the $d\times n$ matrix $E$, that is,
%$E= (E_1,\ldots,E_n) = (E_{j,k})_{1\le j\le d, 1\le k\le n} \in \R^{d\times n}$.
%Define $(b_1,\ldots,b_d)^\tp:=\mb \in \R^d$. Assume that
%%
%%\be\label{Assumption}
%\[
%\sum_{j=1}^d b_j = 0,\quad \sum_{j=1}^d E_{j,k} = 0, \quad \mbox{and} \quad \|E_k\|_2^2:=\sum_{j=1}^d |E_{j,k}|^2=1,\quad \forall\; k = 1,\ldots,n.
%\]
Let $\hmc=(\breve{c}_1, \ldots, \breve{c}_n)^\tp\in \R^n$ be a minimizer to the convex minimization scheme in \eqref{elastic}. Then
the grouping effect property for the elastic net in \eqref{elastic} has been first proved in \cite{zou2005regularization} and then further improved in \cite[Theorem~3]{zhou2013grouping} as follows:
\be\label{groupe}
|\breve{c}_j- \breve{c}_k| \le \frac{1}{2\kappa}\|(E_j-E_k)^\tp (E\hmc-\mb)\|_2, \qquad \forall\; j,k=1, \ldots, n.
\ee
When the columns $E_j$ and $E_k$ of the matrix $E$ are close to each other,
the above inequality in \eqref{groupe} shows that the coefficients  $\breve{c}_j$ and $\breve{c}_k$ are also close to each other. This is called the grouping effect property of the elastic net in \cite{zhou2013grouping,zou2005regularization}. Since $\hmc$ is a minimizer to \eqref{elastic}, it is trivial to see that $\| E\hmc-\mb\|_2\le \|\mb\|_2$. Therefore, it follows directly from \eqref{groupe} that $|\breve{c}_j- \breve{c}_k|\le \frac{\|\mb\|_2}{2\kappa} \|E_j-E_k\|_2$ for all $j,k=1, \ldots, n$.

We notice that the balanced approach \eqref{balanced} differs to the elastic net \eqref{elastic} in that $E=\cB\ws$ and the last term $\kappa\|\mc\|_2^2$ in the elastic net \eqref{elastic} is replaced by $\kappa \|(I-\ws^\tp \ws)\mc\|_2^2$ in the balanced approach \eqref{balanced}. This motivates us to study whether a similar grouping effect property can be established for the balanced approach \eqref{balanced}.
Inspired by the work on the grouping effect property of elastic net \eqref{elastic} in \cite{zhou2013grouping,zou2005regularization}, we obtain in Theorem~\ref{thm:ge} the grouping effect property of the balanced approach \eqref{balanced}. We now prove Theorem~\ref{thm:ge}.

\begin{proof}[Proof of Theorem~\ref{thm:ge}]
%We first prove Theorem~\ref{thm:ge} for the case $\kappa\in (0, \infty)$.
As in \cite{zhou2013grouping,zou2005regularization}, for $\mc\in \R^n$, we define a function $F$ by
\[
F(\mc):=\frac{1}{2}\| \cB \ws \mc - \mb \|_2^2 + \lambda\|\mc \|_1+\kappa \|(I - \ws^\tp \ws)\mc\|_2^2.
\]
Let $e_j$ be the $j$th unit coordinate column vector of $\R^n$.
Let $\hmc=(\breve{c}_1,\ldots,\breve{c}_n)^\tp\in \R^n$ be a minimizer to the convex minimization problem in \eqref{balanced}.
For any $\gep\in \R$, by the definition of the function $F$, we have
\begin{align*}
F(\hmc + \gep e_j) - F(\hmc)
=&\la \cB \ws \hmc-\mb, \cB \ws e_j \ra\gep+
\frac{1}{2}\la \cB \ws e_j, \cB \ws e_j\ra \gep^2
+\lambda( |\breve{c}_j + \gep| -  |\breve{c}_j|)\\
&+2 \kappa \la (I-\ws^\tp \ws)\hmc, (I-\ws^\tp \ws) e_j\ra\gep
+\kappa \la (I-\ws^\tp \ws)e_j, (I-\ws^\tp \ws)e_j\ra \gep^2.
\end{align*}
%
%By the basic identity $\ws \ws^\tp=I$ for a tight frame $\ws$, it is trivial to see that $(I-\ws^\tp \ws)^2=I-\ws^\tp \ws$.
Therefore, we have
\be \label{F:1}
F_\gep+
\Big(\frac{1}{2} \|\cB \ws e_j\|_2^2+\kappa \|(I-\ws^\tp \ws)e_j\|_2^2\Big)\gep^2=F(\hmc + \gep e_j) - F(\hmc)\ge 0,
\ee
where in the last inequality we used the fact that $\hmc$ is a minimizer to \eqref{balanced} and $F_\gep$ is defined to be
\[
F_\gep:=\la \cB \ws \hmc-\mb, \cB \ws e_j \ra\gep
+\lambda( |\breve{c}_j + \gep| -  |\breve{c}_j|)
+2 \kappa \la (I-\ws^\tp \ws)\hmc, (I-\ws^\tp \ws) e_j\ra\gep.
\]

We first consider the case $\breve{c}_j\ne 0$. Then we can take $\gep$ small enough so that $|\gep| < |\breve{c}_j|$. Consequently,
\[
|\breve{c}_j + \gep| -  |\breve{c}_j |  = \sgn(\breve{c}_j) \gep,
\]
where
the signum function of a real number $c\in \R$ is defined to be
%\[\sgn(c) := \begin{cases}-1, &c<0, \\ 0, &c = 0, \\ 1, &c > 0. \end{cases} \]
$\sgn(c):=-1$ if $c<0$, $\sgn(c):=1$ if $c>0$, and $\sgn(0):=0$.
Hence, we can rewrite $F_\gep$ as follows:
\[
F_\gep=\Big( \la \cB \ws \hmc-\mb, \cB \ws e_j \ra
+\lambda \sgn(\breve{c}_j) +2 \kappa \la (I-\ws^\tp \ws)\hmc, (I-\ws^\tp \ws) e_j\ra\Big) \gep.
\]
Taking $\gep \to 0$, we conclude from \eqref{F:1} that $F_\gep=0$. Hence, under the assumption $\breve{c}_j\ne 0$, we must have
\be \label{relation}
\la \cB \ws \hmc-\mb, \cB \ws e_j \ra
+\lambda \sgn(\breve{c}_j) +2 \kappa \la (I-\ws^\tp \ws)\hmc, (I-\ws^\tp \ws)e_j\ra=0.
\ee
Since $\ws e_j=\ws_j$ is the $j$th column of $\ws$, we have
\be \label{eq1}
\la \cB \ws \hmc-\mb, \cB \ws e_j \ra=\la \cB^\tp (\cB \ws\hmc-\mb), \ws_j\ra=\ws_j^\tp \cB^\tp (\cB \ws\hmc-\mb)
\ee
and
\be \label{eq2}
\la (I-\ws^\tp \ws)\hmc, (I-\ws^\tp \ws) e_j\ra=
\la (I-\ws^\tp \ws)^2 \hmc, e_j\ra
=\breve{c}_j-\ws_j^\tp (2I-\ws \ws^\tp)\ws\hmc.
\ee
Rearranging the terms in the identity \eqref{relation}, by \eqref{eq1} and \eqref{eq2}, we conclude that
\[
\lambda \sgn(\breve{c}_j) + 2\kappa \breve{c}_j
=\ws_j^\tp \Big( 2 \kappa (2I-\ws \ws^\tp) \ws\hmc-\cB^\tp(\cB \ws \hmc-\mb)\Big).
\]
Since
$\breve{c}_j = \sgn(\breve{c}_j)|\breve{c}_j|$, the above identity becomes
\be\label{sgn1}
\ws_j^\tp \Big( 2\kappa (2I-\ws \ws^\tp) \ws\hmc-\cB^\tp(\cB \ws \hmc-\mb)\Big)
=\sgn(\breve{c}_j)(\lambda + 2\kappa |\breve{c}_j|).
\ee
%
%Since $\breve{c}_j \ne 0$, by \eqref{sgn1}, we have
%
%\be\label{sgn4}
%\left|
%\ws_j^\tp \Big( 2\kappa (2I-\ws \ws^\tp) \ws\hmc-\cB^\tp(\cB \ws \hmc-\mb)\Big)
%\right| = \lambda + 2\kappa |\breve{c}_j| \ge
%\lambda.
%\ee
%
We now consider the case $\breve{c}_j=0$. Then $|\breve{c}_j+\gep|-|\breve{c}_j|=\sgn(\gep)\gep$ and we have
\[
F_\gep=\Big( \la \cB \ws \hmc-\mb, \cB \ws e_j \ra
+\lambda \sgn(\gep) +2 \kappa \la (I-\ws^\tp \ws)\hmc, (I-\ws^\tp \ws) e_j\ra\Big) \gep.
\]
Taking $\gep\to 0$, we conclude from \eqref{F:1} that $F_\gep\ge 0$. In other words, we must have
\[
|\la \cB \ws \hmc-\mb, \cB \ws e_j \ra
+2 \kappa \la (I-\ws^\tp \ws)\hmc, (I-\ws^\tp \ws) e_j\ra|\le \lambda.
\]
By \eqref{eq1} and \eqref{eq2}, since $\breve{c}_j=0$, the above inequality can be rewritten as
\be \label{eq3}
\Big|\ws_j^\tp\Big(2\kappa (2I-\ws \ws^\tp) \ws\hmc-\cB^\tp(\cB\ws\hmc-\mb)
\Big)\Big|\le \lambda.
\ee

We complete the proof by considering four cases.

Case 1:  $\sgn(\breve{c}_j) \sgn(\breve{c}_k)>0$. Then \eqref{sgn1} leads to
\[
2\kappa(\breve{c}_j - \breve{c}_k) = (\ws^\tp_j - \ws^\tp_k)
\Big( 2\kappa (2I-\ws\ws^\tp) \ws\hmc-\cB^\tp(\cB \ws \hmc-\mb)\Big).
\]
Now it is trivial to see that \eqref{ge:relation} holds.

Case 2: $\sgn(\breve{c}_j)\sgn(\breve{c}_k)<0$. Then we have
\be\label{sgn2}
\sgn(\breve{c}_j - \breve{c}_k) = \sgn(\breve{c}_j).
\ee
It follows from \eqref{sgn1} and $\sgn(\breve{c}_k)=-\sgn(\breve{c}_j)$ that
\begin{align*}
(\ws^\tp_j - \ws^\tp_k) \Big(2\kappa (2I-\ws\ws^\tp) \ws \hmc-\cB^\tp(\cB \ws \hmc-\mb)\Big)
& = 2\kappa (\breve{c}_j - \breve{c}_k ) + \lambda{\sgn(\breve{c}_j)}
-\lambda {\sgn (\breve{c}_k)}\\
&=2\kappa (\breve{c}_j - \breve{c}_k) + 2\lambda \sgn(\breve{c}_j).
\end{align*}
We deduce from  \eqref{sgn2} and the above identity that
\[
|\breve{c}_j -\breve{c}_k| \le  \frac{2\kappa|\breve{c}_j - \breve{c}_k| + 2\lambda}{2\kappa} = \frac{1}{2\kappa}
\left|(\ws^\tp_j - \ws^\tp_k) \Big(2\kappa (2I-\ws \ws^\tp) \ws \hmc-\cB^\tp( \cB\ws \hmc-\mb)\Big)\right|.
\]
This proves \eqref{ge:relation}.

Case 3: $\sgn(\breve{c}_j)\ne 0$ and $\breve{c}_k=0$. Then $\breve{c}_k=0$ and \eqref{eq3} together imply
\[
\Big|\ws_k^\tp\Big(2\kappa (2I-\ws \ws^\tp) \ws\hmc
-\cB^\tp(\cB\ws\hmc-\mb)\Big)\Big|\le \lambda.
\]
Since $\sgn(\breve{c}_j)\ne 0$, \eqref{sgn1} holds and consequently,
\[
\left|
\ws_j^\tp \Big( 2\kappa (2I-\ws \ws^\tp) \ws\hmc-\cB^\tp(\cB \ws \hmc-\mb)\Big)
\right| = \lambda + 2\kappa |\breve{c}_j|.
\]
Therefore, we deduce from the above inequality and the above identity that
\begin{align*}
2\kappa |\breve{c}_j-\breve{c}_k|
&=2\kappa|\breve{c}_j|=\left|
\ws_j^\tp \Big( 2\kappa (2I-\ws \ws^\tp) \ws\hmc-\cB^\tp(\cB \ws \hmc-\mb)\Big)
\right|-\lambda\\
&\le \left|
\ws_j^\tp \Big( 2\kappa (2I-\ws \ws^\tp) \ws\hmc-\cB^\tp(\cB \ws \hmc-\mb)\Big)
\right|-\Big|\ws_k^\tp\Big(2\kappa (2I-\ws \ws^\tp) \ws\hmc-\cB^\tp(\cB\ws\hmc-\mb)\Big)\Big|\\
&\le \Big|(\ws_j^\tp-\ws_k^\tp)\Big(2\kappa (2I-\ws \ws^\tp) \ws\hmc-\cB^\tp(\cB\ws\hmc-\mb)\Big)\Big|.
\end{align*}
This proves \eqref{ge:relation}. The case $\sgn(\breve{c}_k)\ne 0$ and $\breve{c}_j=0$ can be proved exactly the same way by switching the role of $j$ and $k$.

Case 4: $\breve{c}_j=\breve{c}_k=0$. For this case, \eqref{ge:relation} trivially holds.

%If $\kappa=\infty$, then we must have $\hmc=\ws^\tp \ws\hmc$. Consequently, $\breve{c}_j=\ws_j^\tp \ws\hmc$ and
%$$
%|\breve{c}_j-\breve{c}_k|=|(\ws_j-\ws_k)^\tp \ws \hmc|
%$$ which is simply \eqref{ge:relation} since we assumed $\ws\ws^\tp=I$.
This completes the proof of Theorem~\ref{thm:ge}.
\end{proof}

We now explain that Theorem~\ref{thm:ge} recovers \cite[Theorem~3]{zhou2013grouping} for the grouping effect property of the elastic net in \eqref{elastic}.
Take $\ws=\sqrt{2}I$ and $ \cB=\frac{\sqrt{2}}{2}E$ in Theorem~\ref{thm:ge}. Then \eqref{balanced} becomes \eqref{elastic} and \eqref{ge:relation} of Theorem~\ref{thm:ge} becomes
\[
|\breve{c}_j-\breve{c}_k|
\le \tfrac{\sqrt{2}}{2\kappa} \| (\cB_j^\tp-\cB_k^\tp)(\sqrt{2}\cB \hmc-\mb)\|_2
=\tfrac{1}{2 \kappa} \| (E_j-E_k)^\tp (E\hmc-\mb)\|_2
\]
which is exactly \eqref{groupe} for the grouping effect property of the elastic net \eqref{elastic}.

The right-hand side of \eqref{ge:relation} in Theorem~\ref{thm:ge} can be further estimated without involving the minimizer $\hmc$.
Since $\hmc$ is a minimizer to \eqref{balanced}, we must have $\|\cB \ws\hmc-\mb\|_2\le \|\mb\|_2$ and $\|\hmc\|_2\le \|\hmc\|_1\le \frac{1}{2\lambda} \|\mb\|_2^2$. Consequently, for the expressions on the right-hand side of \eqref{ge:relation} in Theorem~\ref{thm:ge}, we have
\[
\|\tfrac{1}{2\kappa}(\ws_j-\ws_k)^\tp \cB^\tp(\cB\ws\hmc-\mb)\|_2
\le \tfrac{1}{2\kappa}
\|(\ws_j-\ws_k)^\tp \cB^\tp\|_2 \|\cB \ws\hmc-\mb\|_2\\
\le \tfrac{1}{2\kappa} \|\mb\|_2 \|(\ws_j-\ws_k)^\tp \cB^\tp\|_2
\]
and
\[
\| (\ws_j-\ws_k)^\tp (2I-\ws\ws^\tp)\ws \hmc\|_2
\le \|(\ws_j-\ws_k)^\tp (2I-\ws\ws^\tp)\ws\|_{2}\|\hmc\|_2
\le \tfrac{1}{2\lambda}\|\mb\|_2^2 \|(\ws_j-\ws_k)^\tp (2I-\ws\ws^\tp)\ws\|_{2}.
\]

As we shall see in Section~\ref{sec:related},
many image inpainting algorithms in the literature employ a more general model than \eqref{balanced} by replacing the term $\lambda \|\mc\|_1$ in \eqref{balanced} with
$\|\diag(\lambda_1, \ldots, \lambda_n) \mc\|_1$.
The following result generalizes Theorem~\ref{thm:ge} for such a generalized model.

\begin{theorem}\label{thm:ge:2}
Let $\ws$ be a $d\times n$ real-valued matrix (not necessarily satisfies the tight frame condition $\ws \ws^\tp =I$) and let
$\hmc = (\breve{c}_1,\ldots, \breve{c}_n)^\tp\in \R^n$ be a minimizer to the following convex minimization problem:
\be \label{balanced:2}
\min_{\mc\in \R^n} \frac{1}{2}\|\cB\ws\mc-\mb\|_2^2+\|\diag(\lambda_1, \ldots, \lambda_n)\mc\|_1+\kappa\|(I-\ws^\tp \ws)\mc\|_2^2
\ee
with positive real numbers $\lambda_1, \ldots, \lambda_n$ and $\kappa$. Then
\be \label{ge:relation:2}
|\breve{c}_j - \breve{c}_k| \le \tfrac{1}{2\kappa} |\lambda_j-\lambda_k|+ \Big\| (\ws_j - \ws_k)^\tp \Big(
(2I-\ws \ws^\tp)\ws \hmc-\tfrac{1}{2\kappa}\cB^\tp (\cB \ws \hmc-\mb)\Big)\Big\|_2
\ee
holds for all $j,k=1, \ldots, n$.
\end{theorem}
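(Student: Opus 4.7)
The plan is to follow the structure of the proof of Theorem~\ref{thm:ge} essentially verbatim, tracking how the coordinate-dependent weights $\lambda_1,\ldots,\lambda_n$ enter the optimality condition at each step. Define
\[
G(\mc):=\tfrac{1}{2}\|\cB\ws\mc-\mb\|_2^2+\|\diag(\lambda_1,\ldots,\lambda_n)\mc\|_1+\kappa\|(I-\ws^\tp\ws)\mc\|_2^2,
\]
and compute $G(\hmc+\gep e_j)-G(\hmc)\ge 0$. Because the weighted $\ell_1$ term separates as a sum, perturbing only the $j$th coordinate yields a contribution $\lambda_j(|\breve c_j+\gep|-|\breve c_j|)$ in place of the uniform $\lambda(|\breve c_j+\gep|-|\breve c_j|)$ from the original proof. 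Everything else in the expansion is identical to the derivation leading to \eqref{F:1}.

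Repeating the original $\gep\to 0$ arguments then produces, in place of \eqref{sgn1} and \eqref{eq3}, the optimality conditions
\[
\ws_j^\tp\bigl(2\kappa(2I-\ws\ws^\tp)\ws\hmc-\cB^\tp(\cB\ws\hmc-\mb)\bigr)=\sgn(\breve c_j)\bigl(\lambda_j+2\kappa|\breve c_j|\bigr)\quad\text{if }\breve c_j\neq 0,
\]
and
\[
\Bigl|\ws_j^\tp\bigl(2\kappa(2I-\ws\ws^\tp)\ws\hmc-\cB^\tp(\cB\ws\hmc-\mb)\bigr)\Bigr|\le\lambda_j\quad\text{if }\breve c_j=0.
\]
Denote the common vector $\mv:=(2I-\ws\ws^\tp)\ws\hmc-\tfrac{1}{2\kappa}\cB^\tp(\cB\ws\hmc-\mb)$, so that the right-hand side of \eqref{ge:relation:2} is $\frac{1}{2\kappa}|\lambda_j-\lambda_k|+|(\ws_j-\ws_k)^\tp\mv|$.

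The four cases from the original proof then go through with minor bookkeeping. In Case~1 ($\sgn(\breve c_j)\sgn(\breve c_k)>0$), subtracting the optimality conditions and dividing by $2\kappa$ gives $(\breve c_j-\breve c_k)+\frac{1}{2\kappa}\sgn(\breve c_j)(\lambda_j-\lambda_k)=(\ws_j-\ws_k)^\tp\mv$, and the triangle inequality yields \eqref{ge:relation:2}. In Case~2 ($\sgn(\breve c_j)\sgn(\breve c_k)<0$), the same subtraction gives $2\kappa(\breve c_j-\breve c_k)+\sgn(\breve c_j)(\lambda_j+\lambda_k)=(\ws_j-\ws_k)^\tp(2\kappa\mv)$; since $\breve c_j-\breve c_k$ already has sign $\sgn(\breve c_j)$ (same observation as in the original Case~2), the left-hand terms add in absolute value, so in fact $2\kappa|\breve c_j-\breve c_k|+\lambda_j+\lambda_k=|(\ws_j-\ws_k)^\tp(2\kappa\mv)|$, which is \emph{stronger} than \eqref{ge:relation:2}. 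Case~4 ($\breve c_j=\breve c_k=0$) is trivial.

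The only genuinely new work is Case~3, where $\breve c_j\neq 0$ and $\breve c_k=0$ (or vice versa). Here the equality $|\ws_j^\tp(2\kappa\mv)|=\lambda_j+2\kappa|\breve c_j|$ combines with the inequality $|\ws_k^\tp(2\kappa\mv)|\le\lambda_k$ to give
\[
2\kappa|\breve c_j-\breve c_k|=|\ws_j^\tp(2\kappa\mv)|-\lambda_j\le|\ws_j^\tp(2\kappa\mv)|-|\ws_k^\tp(2\kappa\mv)|+\lambda_k-\lambda_j\le|(\ws_j-\ws_k)^\tp(2\kappa\mv)|+|\lambda_j-\lambda_k|,
\]
where the last step uses the reverse triangle inequality and $\lambda_k-\lambda_j\le|\lambda_j-\lambda_k|$. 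Dividing by $2\kappa$ gives \eqref{ge:relation:2}. The mildly subtle point, which I expect to be the main obstacle, is to verify that this triangle-inequality chain is tight enough to produce exactly $|\lambda_j-\lambda_k|$ rather than $\lambda_j+\lambda_k$; swapping $j$ and $k$ handles the symmetric subcase, completing the proof.
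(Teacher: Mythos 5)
Your proposal is correct and follows essentially the same route as the paper's proof: the same perturbation argument yields the optimality conditions \eqref{sgn1} and \eqref{eq3} with $\lambda$ replaced by $\lambda_j$, and the same four-case analysis (with the sign observation in Case~2 and the reverse-triangle-inequality chain plus $\lambda_k-\lambda_j\le|\lambda_j-\lambda_k|$ in Case~3) delivers \eqref{ge:relation:2}. The point you flag as potentially subtle in Case~3 works exactly as you describe, and it is precisely how the paper argues.
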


\begin{proof} As in the proof of Theorem~\ref{thm:ge}, we define
\[
F(\mc):=\frac{1}{2}\| \cB \ws \mc - \mb \|_2^2 + \|\diag(\lambda_1, \ldots, \lambda_n)\mc \|_1+\kappa \|(I - \ws^\tp \ws)\mc\|_2^2.
\]
By the same argument as in the proof of Theorem~\ref{thm:ge}, we see that \eqref{sgn1} and \eqref{eq3} hold with $\lambda$ being replaced by $\lambda_j$ for both cases of $\breve{c}_j\ne 0$ and $\breve{c}_j=0$, respectively.

We complete the proof by considering four cases as in the proof of Theorem~\ref{thm:ge}.

Case 1:  $\sgn(\breve{c}_j) \sgn(\breve{c}_k)>0$. Then \eqref{sgn1} with $\lambda=\lambda_j$ leads to
\[
2\kappa(\breve{c}_j - \breve{c}_k) = -\sgn(\breve{c}_j)(\lambda_j-\lambda_k)+(\ws^\tp_j - \ws^\tp_k)
\Big( 2\kappa (2I-\ws\ws^\tp) \ws\hmc-\cB^\tp(\cB \ws \hmc-\mb)\Big).
\]
Now it is trivial to see that \eqref{ge:relation:2} holds.

Case 2: $\sgn(\breve{c}_j)\sgn(\breve{c}_k)<0$. Then $\sgn(\breve{c}_k)=-\sgn(\breve{c}_j)$ and
it follows from \eqref{sgn1} with $\lambda=\lambda_j$ that
\[
(\ws^\tp_j - \ws^\tp_k) \Big(2\kappa (2I-\ws\ws^\tp) \ws \hmc-\cB^\tp(\cB \ws \hmc-\mb)\Big)
= 2\kappa (\breve{c}_j - \breve{c}_k ) +
\sgn(\breve{c}_j)(\lambda_j+\lambda_k).
\]
We deduce from  \eqref{sgn2} and the above identity that
\[
|\breve{c}_j -\breve{c}_k| \le  \frac{2\kappa|\breve{c}_j - \breve{c}_k| + \lambda_j+\lambda_k}{2\kappa} = \frac{1}{2\kappa}
\left|(\ws^\tp_j - \ws^\tp_k) \Big(2\kappa (2I-\ws \ws^\tp) \ws \hmc-\cB^\tp( \cB\ws \hmc-\mb)\Big)\right|.
\]
This proves \eqref{ge:relation:2}.

Case 3: $\sgn(\breve{c}_j)\ne 0$ and $\breve{c}_k=0$. Then $\breve{c}_k=0$ and \eqref{eq3} together imply
\[
\Big|\ws_k^\tp\Big(2\kappa (2I-\ws \ws^\tp) \ws\hmc
-\cB^\tp(\cB\ws\hmc-\mb)\Big)\Big|\le \lambda_k.
\]
Since $\sgn(\breve{c}_j)\ne 0$, \eqref{sgn1} with $\lambda=\lambda_j$ holds and consequently,
\[
\left|
\ws_j^\tp \Big( 2\kappa (2I-\ws \ws^\tp) \ws\hmc-\cB^\tp(\cB \ws \hmc-\mb)\Big)
\right| = \lambda_j + 2\kappa |\breve{c}_j|.
\]
Therefore, we deduce from the above inequality and the above identity that
\begin{align*}
&2\kappa |\breve{c}_j-\breve{c}_k|
=2\kappa|\breve{c}_j|=\left|
\ws_j^\tp \Big( 2\kappa (2I-\ws \ws^\tp) \ws\hmc-\cB^\tp(\cB \ws \hmc-\mb)\Big)
\right|-\lambda_k+\lambda_k-\lambda_j\\
&\le (\lambda_k-\lambda_j)+\left|
\ws_j^\tp \Big( 2\kappa (2I-\ws \ws^\tp) \ws\hmc-\cB^\tp(\cB \ws \hmc-\mb)\Big)
\right|-\Big|\ws_k^\tp\Big(2\kappa (2I-\ws \ws^\tp) \ws\hmc-\cB^\tp(\cB\ws\hmc-\mb)\Big)\Big|\\
&\le |\lambda_j-\lambda_k|+\Big|(\ws_j^\tp-\ws_k^\tp)\Big(2\kappa (2I-\ws \ws^\tp) \ws\hmc-\cB^\tp(\cB\ws\hmc-\mb)\Big)\Big|.
\end{align*}
This proves \eqref{ge:relation:2}. The case $\sgn(\breve{c}_k)\ne 0$ and $\breve{c}_j=0$ can be proved
%exactly the same way
by switching the role of $j$ and $k$.

Case 4: $\breve{c}_j=\breve{c}_k=0$. For this case, \eqref{ge:relation:2} trivially holds.
%This completes the proof.
\end{proof}

When two frame elements $\ws_j$ and $\ws_k$ are at the same resolution/scale level, one often uses the same weight as $\lambda_j=\lambda_k$. Consequently, the presence of the term $|\lambda_j-\lambda_k|$ in \eqref{ge:relation:2} is often not an issue.

\section{Brief Review of Several Related Frame-based Image Inpainting Algorithms}
\label{sec:related}

In this section, we review some related image inpainting algorithms in the literature, in particular, the frame-based iterative image inpainting algorithms proposed in \cite{cai2008framelet,cai2010simultaneous,cai2009split,elad2005simultaneous,li2013adaptive,lim2013nonseparable}.

The (digital) image inpainting problem has been largely initiated in \cite{efros1999texture}.
Since then, many techniques and methods have been proposed in the literature to investigate the image inpainting problem, to mention only a few, for example, see
\cite{Bertalmio2014,Bertalmio2003,cai2008framelet,
cai2010simultaneous,cai2009split,chanshenshen,Chan2005,elad2005simultaneous,li2013adaptive,lim2013nonseparable} and many references therein.
According to the survey article \cite{Bertalmio2014}, methods for image inpainting can be generally classified into three different groups: patch-based methods, PDEs/variational methods,
and sparse representation based methods.
Most patch-based methods first take a reference patch which is a small subimage containing both missing pixels and known pixels of a given observed image. Then one or several similar patches to the reference patch are chosen from a neighboring region of the reference patch. Finally, the missing pixels inside the reference patch are filled with the information and pixels from these similar patches. After the missing pixels of a reference patch are filled/recovered, the patch-based methods move on to a new reference patch
(but never come back to previously visited reference patches again) and repeat this inpainting procedure. PDEs/variational methods employ either a variational principle through a minimization scheme or a partial differential equation (PDE) to fill the missing pixels inside an inpainting mask.
Sparse representation based methods are built on the assumption that natural images are often sparse under certain bases, frames, or dictionaries.
Interested readers can check the survey article \cite{Bertalmio2014} and the books \cite{Chan2005,elad:book}
for a detailed list of references on many different methods for the image inpainting problem.

The frame-based iterative image inpainting algorithms are more recent and belong to
the sparse representation based methods.
In this section we mainly review several related frame-based iterative inpainting algorithms
which have been proposed in \cite{cai2008framelet,cai2010simultaneous,cai2009split,elad2005simultaneous,li2013adaptive,lim2013nonseparable}.
All these frame-based iterative algorithms employ either a single frame or two tight frames with one for the cartoon component and the other for the texture component.
The soft-thresholding and hard-thresholding functions, which have been frequently used in many applications including most iterative frame-based image inpainting algorithms, are given by
\[
\eta^{soft}_{\lambda}(c)=
\begin{cases}
c-\lambda\tfrac{c}{|c|}, &|c|>\lambda,\\
0, &\text{otherwise}
\end{cases}
\qquad \mbox{and}\qquad
\eta^{hard}_{\lambda}(c)=
\begin{cases}
c, &|c|>\lambda,\\
0, &\text{otherwise},
\end{cases}
\]
where $c\in \C$ and $\lambda\ge 0$ is the thresholding value.
In this section, $\eta_\lambda$ will denote either the soft-thresholding function or the hard-thresholding function as well as their variants.

We first discuss inpainting algorithms employing a single tight frame $\ws$.
Let $\ws\in \R^{d\times n}$ be a given tight frame and $\{1, \ldots, d\}\backslash \Omega$ be a given inpainting mask (that is, data is only observable on the region $\Omega$). Recall that the $d\times d$ projection matrix $\cP$ is defined in \eqref{proj}.
To recover an unknown clean image $\x$ from the observed noisy image $\y$ for the image inpainting problem in \eqref{iip}, all the frame-based iterative inpainting algorithms in \cite{cai2008framelet,chanshenshen,li2013adaptive,lim2013nonseparable} and our proposed inpainting algorithm in this paper use a single tight frame $\ws$
and take the following form:
Let $\x_0=0$ be the initial guess and
\be\label{iterative}
\x_{\ell} = \cP\y + (I-\cP) (\ws \eta_{\Lambda_\ell}(\ws^\tp \x_{\ell-1})), \qquad \ell\in \N,
\ee
where $\Lambda_\ell:=(\lambda_{\ell,1}, \ldots, \lambda_{\ell,n})^\tp$ are vectors of nonnegative thresholding values and the thresholding operator $\eta_{\Lambda_\ell}$ is defined to be
\be
\eta_{\Lambda_\ell}((c_1,\ldots,c_n)^\tp) =(\eta_{\lambda_{\ell,1}}(c_1),\ldots,\eta_{\lambda_{\ell,n}}(c_n))^\tp
\ee
such that the sequence of thresholding values $\{\lambda_{\ell,j}\}_{\ell=1}^\infty$ decreases (often to $0$ or the noise level) as $\ell \to \infty$ for every $j=1, \ldots, n$.
The thresholding vectors $\Lambda_\ell$ are often chosen to be the form $r^\ell \Lambda$ for some $0<r<1$ and an initial thresholding vector $\Lambda:=(\lambda_1, \ldots, \lambda_n)^\tp$.
The iterative algorithm stops when the error $\|\x_{\ell}-\x_{\ell-1}\|_2$ is less than a given tolerance and the output $\x_\ell$ is regarded as a restored/recovered inpainted image of the unknown clean image $\x$ in \eqref{iip}.

We now recall tight framelet filter banks which have been used in several frame-based image inpainting algorithms.
A one-dimensional filter $a=\{a(k)\}_{k\in \Z}: \Z \rightarrow \C$ is simply a sequence on $\Z$ and we define $\wh{a}(\xi):=\sum_{k\in \Z} a(k) e^{-ik\xi}$, $\xi\in \R$ which is a $2\pi$-periodic function.
For filters $a, b_1, \ldots, b_s: \Z \rightarrow \C$, we say that $\{a; b_1, \ldots, b_s\}$ is a tight framelet filter bank if
\[
|\wh{a}(\xi)|^2+|\wh{b_1}(\xi)|^2+\cdots+|\wh{b_s}(\xi)|^2=1
\quad \mbox{and}\quad
\wh{a}(\xi)\overline{\wh{a}(\xi+\pi)}+\wh{b_1}(\xi)\overline{\wh{b_1}(\xi+\pi)}+
\cdots+\wh{b_s}(\xi)\overline{\wh{b_s}(\xi+\pi)}=0.
\]
For two-dimensional (2D) problems, one simply uses the tensor product 2D tight framelet filter bank:
\[
\{a; b_1, \ldots, b_s\}\otimes \{a; b_1, \ldots, b_s\}:=
\{a\otimes a; b_\ell\otimes b_m, \ell, m=1, \ldots, s\},
\]
where the tensor product filter $u\otimes v$ is defined to be $[u\otimes v](j,k):=u(j)v(k)$ for $j,k\in \Z$ for two one-dimensional filters $u,v: \Z \rightarrow \C$. See \cite{daub:book,dong2010mra,HanACHA1997,HanMMNP2013,RonShenJFA1997,shen2010wavelet} and many references therein for some background on tight framelets, wavelet filter banks, and wavelet analysis.

The inpainting algorithm in \cite{cai2008framelet,chanshenshen} uses the soft-thresholding function and the underlying tight frame $\ws$ is generated from the tensor product of the spline tight framelet filter bank $\{a; b_1, \ldots, b_4\}$ (see \cite{RonShenJFA1997}):
\begin{align*}
&a=\tfrac{1}{16}\{ 1, 4, 6, 4, 1\}_{[-2,2]},\qquad
 b_1=\tfrac{1}{8}\{ 1, 2, 0, -2, -1\}_{[-2,2]}, \quad b_2=\tfrac{\sqrt{6}}{16}\{-1, 0, 2, 0, -1\}_{[-2,2]}, \\
& b_3=\tfrac{1}{8}\{-1, 2, 0, -2, 1\}_{[-2,2]}, \quad  b_4=\tfrac{1}{16}\{ 1, -4, 6, -4, 1\}_{[-2,2]}.
\end{align*}
Since the tensor product tight framelet filter bank $\{a; b_1, \ldots, b_4\}\otimes \{a; b_1, \ldots, b_4\}$ has $25$ filters and is implemented in an undecimated fashion with one decomposition level, the redundancy rate of the tight frame $\ws$ used in
\cite{cai2008framelet,chanshenshen} is $25$, that is, $\ws\in \R^{d\times n}$ with $n=25d$.

The inpainting algorithm in \cite{li2013adaptive} uses a local soft-thresholding function (see \cite[Formula~(19)]{li2013adaptive} and Section~\ref{sec:tpctf} for more detail)
and the underlying tight frame $\ws$ is generated from the undecimated discrete cosine transform. More precisely, the matrix representation of the discrete cosine transform with a block size $m\times m$ is an orthonormal $m\times m$ matrix given by
\be\label{matrixC}
B: = \frac{1}{\sqrt{m}} \left[  \epsilon_j \cos \frac{(j-1)(2k-1)\pi}{2m} \; : \;  k,j = 1,2,\ldots, m  \right],
\ee
where $\epsilon_1=1$ and $\epsilon_j=\sqrt{2}$ for $j=2,\ldots,m$.
For $j=1, \ldots, m$, let $B_j$ be the $j$th column of the matrix $\frac{1}{\sqrt{m}} B$, that is, $B_j:=\{\frac{1}{m} \epsilon_j \cos\frac{(j-1)(2k-1)\pi}{2m}\}_{1\le k\le m}$ which is regarded as a sequence on $\Z$ with support $[1,\ldots, m]$. Then $\{B_1, \ldots, B_m\}$ forms an undecimated tight framelet filter bank, that is, $\sum_{j=1}^m |\wh{B_j}(\xi)|^2=1$.
The two-dimensional tensor product tight framelet filter bank $\{B_1, \ldots, B_m\}\otimes \{B_1, \ldots, B_m\}$ is used
to generate the tight frame $\ws$ in \cite{li2013adaptive}.
Since the block size $m=7$ in \cite{li2013adaptive}, there are $49$ filters in $\{B_1, \ldots, B_m\}\otimes \{B_1, \ldots, B_m\}$. Hence, the tight frame in \cite{li2013adaptive} has redundancy rate $49$, since all the tensor product filters are implemented in an undecimated fashion. We point out that \cite{li2013adaptive} uses an accelerated version of the iterative algorithm in \eqref{iterative}.
%by using an acceleration technique in \cite{?} for iterative algorithms.

The inpainting algorithm in \cite{lim2013nonseparable} uses the hard-thresholding function instead
and its underlying frame $\ws$ is generated from compactly supported nonseparable shearlets having one two-dimensional low-pass filter and $16, 16, 8, 8$ two-dimensional high-pass filters at $4$ different resolutions.
Since there are total $49$ filters which have been implemented in an undecimated fashion, the frame used in \cite{lim2013nonseparable} has the redundancy rate $49$.
But the frame used in \cite{lim2013nonseparable} is not a tight frame and a (non-compactly supported) dual frame $\tilde{\ws}$ is used in \cite{lim2013nonseparable} to achieve perfect reconstruction.

The general iterative algorithm in \eqref{iterative} is closely linked to the balanced approach in \eqref{balanced}.
If all the thresholding vectors $\Lambda_\ell=\Lambda=(\lambda_1, \ldots, \lambda_n)^\tp$ are constants for all $\ell\in \N$ and if $\eta_\Lambda$ uses the soft-thresholding function,
then it has been proved in \cite{cai2008framelet} that the sequence $\{\mc_\ell := \eta^{soft}_{\Lambda}(\ws^\tp \x_\ell)\}_{\ell\in \N}$
converges to the minimizer $\breve{\mc}$ of the following convex minimization problem using
the balanced approach as follows:
\be\label{B1}
\min_{\mc\in\R^n} \frac{1}{2}\| \cP\ws \mc-\cP\y \|_2^2 + \|\diag (\Lambda)\mc \|_1+ \kappa \|(I - \ws^\tp \ws)\mc\|_2^2 \qquad \mbox{with}\quad \kappa=1/2.
\ee
The restored image is given by $\ws \hmc$.
%Currently there are no theoretical results on the convergence of the iterative algorithm in \eqref{iterative} if other thresholding strategies or nonconstant thresholding vectors $\Lambda_\ell$ are used.

We now discuss simultaneous cartoon and texture inpainting algorithms in \cite{cai2010simultaneous,cai2009split,elad2005simultaneous}.
Natural images $\x$ often have two layers of structures: the cartoon part $\x_c$ for geometry structures and the texture part $\x_t$ for oscillating patterns.
One adopts two tight frames which can represent sparsely the cartoon and texture parts of the image respectively.
Let $\ws_c$ be a tight frame for sparsely representing the cartoon part of an image, and let $\ws_t$ be a tight frame for sparely approximating the texture part of an image. A cartoon and texture inpainting algorithm called
(Morphological Component Analysis) MCA-based algorithm is introduced in \cite{elad2005simultaneous}. The MCA-based iterative algorithm is given as follows: Let $\x_{t,0}= 0$ and $\x_{c,0}=0$. For $\ell\in \N$,
\begin{align}
  &\x_{t,\ell} =  \ws \eta_{\Lambda_{t,\ell}} \left( \ws^{\tp}_t  (\cP \y + (I-\cP)\x_{t,\ell-1} - \cP \x_{c,\ell-1} ) \right), \nonumber \\
  &\x_{c,\ell} =  \ws \eta_{\Lambda_{c,\ell}} \left( \ws^{\tp}_c (\cP \y + (I-\cP)\x_{c,\ell-1} - \cP \x_{t,\ell-1} )\right), \nonumber \\
  &\x_{c,\ell} =  \x_{c,\ell} - \mu \nabla \cdot \left(\frac{\nabla \x_{c,\ell}}{|\nabla \x_{c,\ell}|}\right),   \nonumber \\
  &\mbox{update the vectors}\ \Lambda_{t,\ell}\, \mbox{and}\;  \Lambda_{c,\ell}\ \mbox{of thresholding values}, \label{MCA}
\end{align}
where $\mu$ is a constant and $\nabla$ is the gradient operator. The output $\x_{t,\ell}+\x_{c,\ell}$ is regarded as the  restored/recovered inpainted image. If $\eta_{\Lambda_{t,\ell}}$ and $\eta_{\Lambda_{c,\ell}}$ use the soft-thresholding function, as argued by a heuristic argument in \cite{elad2005simultaneous}, the above MCA-based algorithm in
\cite{elad2005simultaneous} is roughly a block-coordinate-relaxation
method to solve the following model approximately:
\be\label{eqd1d2}
\min_{\x_t,\x_c \in \R^d} \frac{1}{2}\| \cP(\x_t+\x_c-\y) \|^2_2 + \lambda_1\|\ws^{\tp}_t \x_t\|_1 + \lambda_2 \| \ws^{\tp}_c \x_c \|_1 + \lambda_3 \| \x_c \|_{TV},
\ee
where $\|\cdot\|_{TV}$ is the total variation and the three nonnegative regularization parameters $\lambda_1, \lambda_2, \lambda_3$ (often with $\lambda_1=\lambda_2$) depend on the choices of $\Lambda_{t,\ell}, \Lambda_{c,\ell}, \mu$ in the MCA-based algorithm.
To compare the MCA-based algorithm in \cite{elad2005simultaneous} with our proposed algorithm, we choose the tight frame $\ws_c$ built from the curvelet transform with five decomposition levels and the tight frame $\ws_t$ built from the
$50\%$ overlapping discrete cosine transform with a block size $32\times 32$. The tight frame $\ws_t$ has redundancy rate $2.89$ and
 $\ws_t$ has redundancy rate $2$.
Thus,
the total redundancy rate of the frames is $4.89$.

For $\lambda_3=0$ in \eqref{eqd1d2}, \cite{cai2009split} proposes the split Bregman method to solve the model \eqref{eqd1d2} with given regularization parameters $\lambda_1$ and $\lambda_2$.
Instead of updating the residuals in the image domain, the following iterative algorithm, proposed in \cite{cai2010simultaneous}, updates the residuals in the frame transform domain as follows: For $\ell\in \N$,
\begin{align}
  &\mc_{t,\ell} = \eta_{\Lambda_{t,\ell}}\left((1-\kappa_t \gamma)\mc_{t,\ell-1}+\gamma \ws_t^{\tp}(\kappa_t \ws_t \mc_{t,\ell-1} + \cP(\y-\ws_t\mc_{t,\ell-1} -\ws_c \mc_{c,\ell-1})) \right), \nonumber\\
  &\mc_{c,\ell} = \eta_{\Lambda_{c,\ell}}\left((1-\kappa_c \gamma)\mc_{c,\ell-1}+\gamma \ws_c^{\tp}(\kappa_c \ws_t \mc_{t,\ell-1} + \cP(\y-\ws_t\mc_{t,\ell-1} -\ws_c \mc_{c,\ell-1})) \right), \nonumber\\
  &\mbox{If necessary, update the vectors}\ \Lambda_{t,\ell}\; \mbox{and}\;   \Lambda_{c,\ell}\ \mbox{of thresholding values}, \label{Cai}
\end{align}
where $\kappa_t$, $\kappa_c$ and $\gamma$ are positive constants. The parameter $\gamma$ is usually less than $\frac{1}{\max\{1,\kappa_t,\kappa_c\}}$. The output $\ws_t\mc_{t,\ell}+\ws_c\mc_{c,\ell}$ is regarded as the restored/recovered image.
Fixing the thresholding values, the iterative algorithm \eqref{MCA} without the total variation term (that is, $\lambda_3=0$) and the iterative algorithm \eqref{Cai} look very similar to each other. It has been proved in \cite{cai2010simultaneous} that the sequences $\{\mc_{t,\ell}\}_{\ell\in \N}$ and $\{\mc_{c,\ell}\}_{\ell\in \N}$, generated by the iterative algorithm \eqref{Cai} with fixed thresholding vectors
$\Lambda_{t,\ell}=\Lambda_t$ and $\Lambda_{c,\ell}=\Lambda_c$ for all $\ell\in \N$,
converge to the solution of the following balanced approach using two tight frames:
\be\label{model4}
\begin{split}
\min_{\mc_t,\mc_c\in\R^n} &\frac{1}{2}\Big \|\cP(\ws_t \mc_{t}+\ws_c \mc_c-\y) \Big \|_2^2
+ \|\diag (\Lambda_t)\mc_{t} \|_1+\|\diag (\Lambda_c)\mc_{c} \|_1\\
&\qquad\qquad
+ \frac{\kappa_t}{2} \|(I - \ws_t^\tp \ws_t)\mc_t\|_2^2+ \frac{\kappa_c}{2} \|(I - \ws_c^\tp \ws_c)\mc_c\|_2^2.
\end{split}
\ee
For the numerical experiments in \cite{cai2010simultaneous,cai2009split}, the tight frame $\ws_c$ is built from the tensor product of the following spline tight framelet filter bank $\{a; b_1, b_2\}$ (see \cite{RonShenJFA1997}):
\[
a=\tfrac{1}{4}\{ 1, 2, 1\}_{[-1,1]}, \quad
b_1=\tfrac{\sqrt{2}}{4}\{ -1, 0, 1\}_{[-1,1]}, \quad
b_2=\tfrac{1}{4}\{-1,  2,  -1\}_{[-1,1]}.
\]
Since there are total $9$ filters in $\{a; b_1, b_2\}\otimes \{a; b_1, b_2\}$ and all the filters are implemented in an undecimated fashion in \cite{cai2010simultaneous,cai2009split} with one decomposition level, the redundancy rate of $\ws_c$ is $9$.
The tight frame $\ws_t$ is built from the  $50\%$ overlapping discrete cosine transform with a block size $16\times 16$. The tight frame $\ws_t$ has the redundancy rate $2$ and hence, the total redundancy rate is $11$.

\section{Directional Tensor Product Complex Tight Framelets}
\label{sec:tpctf}

Since we shall use directional tensor product complex tight framelets ($\tpctf$s) in our image inpainting algorithm, in this section we briefly review $\tpctf$s which have been initially introduced in \cite{HanMMNP2013} and further developed in \cite{HanZhaoSIIMS2014} for the image denoising problem.

For $a=\{a(k)\}_{k\in \Z} \in l_1(\Z)$, recall that $\wh{a}(\xi):=\sum_{k\in \Z} a(k)e^{-i k\xi}$ as its Fourier series which is a $2\pi$-periodic function and $a(k)=\frac{1}{2\pi}\int_{-\pi}^\pi \wh{a}(\xi) e^{ik\xi} d\xi$ for all $k\in \Z$.
Let us first discuss how to modify a characteristic function of an interval into a smooth bump function (\cite{daub:book,HanACHA1997}). Let $\pP_{m}(x):=(1-x)^m \sum_{j=0}^{m-1} \binom{m+j-1}{j} x^j$. Then $\pP_{m}$ satisfies
the identity $\pP_{m}(x)+\pP_{m}(1-x)=1$ (see \cite{daub:book}). For $c_L<c_R$ and two positive numbers $\gep_L, \gep_R$ satisfying $\gep_L+\gep_R\le c_R-c_L$, we define a bump function $\chi_{[c_L, c_R]; \gep_L, \gep_R}$ on $\R$ by
\be \label{bump:func}
\chi_{[c_L, c_R]; \gep_L, \gep_R}(\xi):=
\begin{cases} 0, \quad &\xi\le c_L-\gep_L \; \mbox{or}\; \xi \ge c_R+\gep_R, \\
\sin\big(\tfrac{\pi}{2}\pP_{m}(\tfrac{c_L+\gep_L-\xi}{2\gep_L})\big), \quad &c_L-\gep_L<\xi<c_L+\gep_L,\\
1, \quad &c_L+\gep_L\le \xi\le c_R-\gep_R,\\
\sin\big(\tfrac{\pi}{2}\pP_{m}(\tfrac{\xi-c_R+\gep_R}{2\gep_R})\big),
\quad &c_R-\gep_R<\xi<c_R+\gep_R.
\end{cases}
\ee
Let $s\in \N$ be a positive integer. Let $c_1, \gep_1$ be positive real numbers satisfying
\[
0<\gep_1\le \min(c_1, \tfrac{\pi}{2}-c_1, \tfrac{c_1+(s-1)\pi}{2s}).
\]
A low-pass filter $a$ and high-pass filters $b^{1, p}, \ldots, b^{s,p}, b^{1,n}, \ldots, b^{s,n}$ have been constructed in \cite{HanMMNP2013,HanZhaoSIIMS2014}
by defining their Fourier series on the basic interval $[-\pi, \pi)$ as follows:
\be \label{ctf}
\wh{a}:=\chi_{[-c_1, c_1]; \gep_1, \gep_1}, \quad
\wh{b^{\ell,p}}:=\chi_{[c_\ell,c_{\ell+1}]; \gep_1, \gep_1}, \quad
\wh{b^{\ell,n}}:=\ol{\wh{b^{\ell,p}(-\cdot)}}, \qquad \ell=1, \ldots, s,
\ee
where $c_\ell:=c_1+\frac{\pi-c_1}{s}(\ell-1)$ for $\ell=1, \ldots, s$.
Then it is straightforward to directly check that $\ctf_{2s+1}:=\{a; b^{1,p}, \ldots, b^{s,p}, b^{1,n}, \ldots, b^{s,n}\}$ forms a tight framelet filter bank, that is, the following identities hold:
\be \label{tffb:1}
|\wh{a}(\xi)|^2+\sum_{\ell=1}^s |\wh{b^{\ell,p}}(\xi)|^2+\sum_{m=1}^s |\wh{b^{m,n}}(\xi)|^2=1,
\qquad a.e.\; \xi\in [-\pi, \pi],
\ee
\be \label{tffb:0}
\begin{split}
\wh{a}(\xi)\ol{\wh{a}(\xi+\pi)}+\sum_{\ell=1}^s \wh{b^{\ell,p}}(\xi)\ol{\wh{b^{\ell,p}}(\xi+\pi)}&+\sum_{m=1}^s \wh{b^{m,n}}(\xi)
\ol{\wh{b^{m,n}}(\xi+\pi)}=0,
\qquad a.e.\; \xi\in [-\pi, \pi].
\end{split}
\ee
The tensor product complex tight framelet filter bank $\tpctf_{2s+1}$ for dimension two is simply
\begin{align*}
\tpctf_{2s+1}&:=\ctf_{2s+1}\otimes \ctf_{2s+1}\\
&=\{a; b^{1,p}, \ldots, b^{s,p}, b^{1,n}, \ldots, b^{s,n}\}\otimes \{a; b^{1,p}, \ldots, b^{s,p}, b^{1,n}, \ldots, b^{s,n}\}.
\end{align*}

To further improve the directionality of $\tpctf_{2s+1}$, another closely related family of tensor product complex tight framelet filter banks $\tpctf_{2s+2}$ has been introduced in \cite{HanZhaoSIIMS2014}. Let $0<\gep_0<c_1-\gep_1$. Define filters $a, b^{1,p}, \ldots, b^{s,p}, b^{1,n}, \ldots, b^{s,n}$ as in \eqref{ctf} and define two auxiliary low-pass filters $a^p, a^n$ by
\be \label{tpctf:even}
\wh{a^{p}}:=\chi_{[0, c_1]; \gep_0, \gep_1}, \qquad
\wh{a^{n}}:=\ol{\wh{a^p}(-\cdot)}.
\ee
We can directly check that $\{a^p, a^n; b^{1,p}, \ldots, b^{s,p}, b^{1,n}, \ldots, b^{s,n}\}$ is still a tight framelet filter bank. Now the tensor product complex tight framelet filter bank $\tpctf_{2s+2}$ for dimension two is defined to be
\[
\tpctf_{2s+2}:=\{a\otimes a; \tpctf\mbox{-HP}_{2s+2}\},
\]
where $\tpctf\mbox{-HP}_{2s+2}$ consists of total
$4s(s+2)$ high-pass filters given by
\[
a^p\otimes b^{\ell,p}, a^p\otimes b^{\ell,n}, a^n\otimes b^{\ell,p}, a^n\otimes b^{\ell,n},
b^{\ell,p}\otimes b^{m,p}, b^{\ell,p}\otimes b^{m,n},
b^{\ell,n}\otimes b^{m,p}, b^{\ell,n}\otimes b^{m,n},\qquad
\ell,m=1, \ldots, s.
\]
See \cite{HanMMNP2013,HanZhaoSIIMS2014}
for more details and explanation on directional tensor product complex tight framelets.

For our image inpainting algorithm, we shall use the particular tensor product complex tight framelet filter bank $\tpctf_6$ with $c_1=\frac{119}{128}$, $\gep_0=\frac{35}{128}$ and $\gep_1=\frac{81}{128}$. This tensor product complex tight framelet filter bank $\tpctf_6$ has been used in \cite{HanZhaoSIIMS2014} for image denoising with impressive performance.
To illustrate the directionality of this $\tpctf_6$, the elements in $\tpctf_6$ for dimension two are given in Figure~\ref{tpctf6graph} (also see \cite[Figure~8]{HanZhaoSIIMS2014}).

\begin{figure}[ht]
\centerline{
\includegraphics[width=5.0in,height=2.0in]{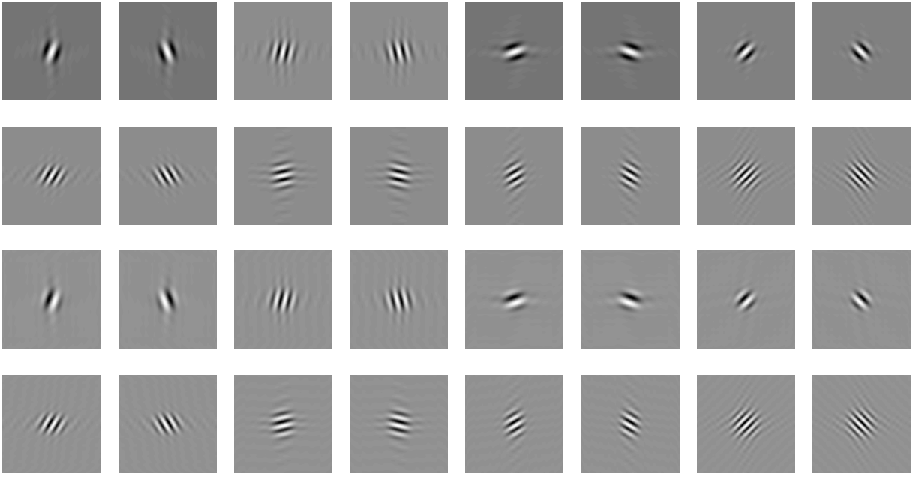}}
\begin{caption}{
The first two rows show the real part and the last two rows show the imaginary part of the elements in $\tpctf_6$.
Among these 16 graphs, the directions along $\pm 45^\circ$ are repeated once. Hence, there are total 14 directions in $\tpctf_6$ for dimension two.
} \label{tpctf6graph}
\end{caption}
\end{figure}

For $\x=(x_1, \ldots, x_d)^\tp\in \R^d$ (The integer $d$ is often a power of $2$),
we can uniquely extend $\x$ into a $d$-periodic sequence $\{x_k\}_{k\in \Z}$ on $\Z$ such that $x_{d+k}=x_k$ for all $k\in \Z$. Note that there are $32$ real-valued two-dimensional high-pass filters in $\tpctf_6$ by separating the real and imaginary parts of each filter.
Therefore, the multilevel discrete framelet transform (see \cite{HanMMNP2013}) employing the tensor product complex tight framelet filter bank $\tpctf_6$ induces a real-valued tight frame $\ws\in \R^{d\times n}$ with
$n\le 32 d\sum_{j=1}^\infty \frac{1}{4^j}=\frac{32}{3}d$.
Consequently, regardless of the decomposition level, the redundancy rate of the tight frame built from two-dimensional $\tpctf_6$ is no more than $\frac{32}{3}\approx 10.67$.

To take the advantage of the cross-scale relations in the wavelet tree of frame coefficients,
instead of using the soft thresholding or hard thresholding,
we shall employ the bivariate shrinkage $\eta_\lambda^{bs}$ function which has been introduced in \cite{sendur2002bivariate} as follows:
\be \label{bs}
\eta^{bs}_\lambda(c)=\eta_{\lambda_c}^{soft}(c)
=\begin{cases}
c-\lambda_c \tfrac{c}{|c|}, &|c|> \lambda_c,\\
0, &\text{otherwise},
\end{cases}
\qquad \mbox{with}\quad
\lambda_c:=\frac{\sqrt{3}\sigma_n^2}{\sigma_c \sqrt{1+|c_p/c|^2}},
\ee
where $\sigma_n:=\lambda \|b\|_2$ with $b$ being the high-pass filter inducing the frame coefficient $c$,
the frame coefficient $c_p$ is the parent coefficient of $c$ in the immediate higher scale, and
\[
\sigma_c:=\begin{cases} \sqrt{\breve{\sigma}_c^2-\sigma_n^2}, &\breve{\sigma}_c> \sigma_n,\\
0, &\text{otherwise}
\end{cases}
\qquad \mbox{with}\qquad
\breve{\sigma}_c^2:=\frac{1}{\#N_c}\sum_{j\in N_c} |c_j|^2,
\]
where $\#N_c$ is the cardinality of the set $N_c$ which is the $[-3,3]^2$ window centering around the frame coefficient $c$ at the band induced by the filter $b$. We point out that a similar local soft-thresholding strategy as in \eqref{bs} is used in \cite[Formula (19)]{li2013adaptive} with $\lambda_c=\frac{\sqrt{2}\sigma_n^2}{\sigma_c}$, where $\sigma_n=\sigma/7$ and
\[
\sigma_c=\max\Big\{ \sqrt{\Big(\tfrac{1}{\# N_c} \sum_{j\in N_c} \sqrt{2} |c_j|\Big)^2-\sigma_n^2}, 10^{-3}\Big\},
\]
where $N_c$ is the $[-4,4]^2$ window centering around the frame coefficient $c$ at the same band.

From Figure~\ref{tpctf6graph},
we see that both the real part and imaginary part of $\tpctf_6$ have six edge-like directional elements and ten texture-like (or DCT-like) directional elements.
As shown in \cite{HanZhaoSIIMS2014}, the algorithm using $\tpctf_6$ and bivariate shrinkage in \eqref{bs} performs superior for image denoising.
Inspired by the results in \cite{HanZhaoSIIMS2014} for image denoising, we shall apply the tensor product complex tight framelet filter bank $\tpctf_6$ in our image inpainting algorithm and hope that both the cartoon and texture parts of an image can be simultaneously handled well by $\tpctf_6$.
See Section~\ref{sec:experiment} for the performance of $\tpctf_6$ for the image inpainting problem.

\section{Proposed Image Inpainting Algorithm Using $\tpctf_6$ and Numerical Experiments}
\label{sec:experiment}

Using directional tensor product complex tight framelets, in this section we shall provide details for our proposed image inpainting algorithm which has been outlined in Section~\ref{sec:introduction}. Then we shall present some numerical experiments on our proposed inpainting algorithm and its performance compared with six well-known frame-based image inpainting algorithms in \cite{cai2008framelet,cai2010simultaneous,cai2009split,elad2005simultaneous,li2013adaptive,lim2013nonseparable}.

\subsection{The Proposed Image Inpainting Algorithm Using $\tpctf_6$}
We now provide full details in Algorithm~\ref{mainalg} for our proposed image inpainting algorithm.
Recall that the image inpainting problem is given in \eqref{iip}, where $\Omega$ is the given observable region (i.e., $\{1, \ldots, d\}\backslash \Omega$ is the given inpainting mask), $\y=(y_1, \ldots, y_d)^\tp$ is the observed partial image on the observable region $\Omega$, $n_j$ denotes the i.i.d. zero-mean Gaussian noise with standard deviation $\sigma$, and $\x$ is the unknown clean image to be restored/recovered.
Let $\ws$ be the tight frame built from the tensor product complex tight framelet filter bank $\tpctf_6$ as discussed in Section~\ref{sec:tpctf}. Our proposed iterative image inpainting algorithm using $\tpctf_6$ is given in Algorithm~\ref{mainalg}.

\begin{algorithm}
\caption{(Frame-based Image Inpainting Algorithm Using $\tpctf_6$)}\label{mainalg}
\begin{algorithmic}[1]
\REQUIRE The tight frame $\ws\in \R^{d\times n}$ built from $\tpctf_6$, an inpainting mask $\{1, \ldots, d\}\backslash \Omega$ (i.e., $\Omega$ is a given observable region), standard deviation $\sigma$ of i.i.d. zero-mean Gaussian noise, and an observed partial image $\y\in \R^d$ on the observable region $\Omega$.
 \STATE Initialization: $\x_{-1} =\x_0=0$, $i=\ell=1$, $r=1-\frac{\# \Omega}{n}$, where $\# \Omega$ denotes the cardinality of $\Omega$.
 \STATE Generate thresholding values $\Lambda_1, \Lambda_2$ by \eqref{Lambda} and iteration parameters $N_1, N_2, \tol_1, \tol_2$ by Table~\ref{tab:parameters}.
 \STATE Initialize thresholding value: $\lambda = \Lambda_1(1)$.
 \WHILE {($i\le N_1+N_2$)}
 \STATE  $\y_\ell = \cP(\y) + (I-\cP)(\x_\ell)$.
 \STATE  $\mc_{\ell+1} = \eta^{bs}_{\lambda} (\ws^\tp\y_\ell)$.
 \STATE  $\x_{\ell+1} = \ws\mc_{\ell+1}$.
 \STATE  $\error = \| (I-\cP)(\x_{\ell+1} - \x_{\ell}) \|_2/
 \| \cP \y\|_2$.
 \IF    {($\error < \tol_1$) \AND  ($i<N_1$)}
 \STATE  $i=i+1$.
 \STATE  $\lambda = \Lambda_1(i)$.
 \ELSIF  {($\error < \tol_2$) \AND ($N_1 \le i < N_1 +  N_2$)}
 \STATE  $i=i+1$.
 \STATE  $\lambda = \Lambda_2(i-N_1)$.
 \ELSIF  {($\error < \tol_2$) \AND  ($i = N_1 +  N_2$)}
 \STATE  Break.
 \ENDIF
 \STATE  $\ell=\ell+1$.
 \ENDWHILE
 \RETURN $\x_{\ell+1}$ as the restored image.
 \end{algorithmic}
 \end{algorithm}

The output $\x_{\ell+1}$ from Algorithm~\ref{mainalg} is the restored/inpainted image.
The projection operator $\cP$ is defined in \eqref{proj} and the thresholding $\eta^{bs}_\lambda$ is defined in \eqref{bs}.
Note that $r$ is the percentage of missing pixels, that is, the ratio between the number of missing pixels in the inpainting mask and the number of all pixels in an image.
We now discuss how to generate the thresholding values $\Lambda_1$ and $\Lambda_2$.
Algorithm~\ref{mainalg} uses decreasing thresholding values $\Lambda_1\cup \Lambda_2$ from $[\lambda_{\min}, \lambda_{\midd}]\cup [\lambda_{\midd}, \lambda_{\max}]$,
where we set
\be \label{lambdaminmax}
\lambda_{\min}:=\max\{1,\sigma(1-\tfrac{r^2}{2})\}, \qquad \lambda_{\max}:=512,
\qquad
\lambda_{\midd}:=\min \{\max \{2 \lambda_{\min}+10, 20\}, \lambda_{\max}\}.
\ee

The sequence $\Lambda_1$ of decreasing thresholding values on $[\lambda_{\midd}, \lambda_{\max}]$ and
the sequence $\Lambda_2$ of decreasing thresholding values on $[\lambda_{\min},\lambda_{\midd}]$ are given by
\be\label{Lambda}
\Lambda_1(i) =  r_1^{\frac{i-N_1}{N_1-1}} \lambda_{\midd},\quad i = 1,\ldots,N_1
\qquad \mbox{and}\qquad
\Lambda_2(i) =  r_2^{\frac{i-N_2}{N_2}} \lambda_{\min},\quad i = 1,\ldots,N_2,
\ee
where $r_1 := \frac{\lambda_{\midd} }{\lambda_{\max}}$ and
$r_2 := \frac{ \lambda_{\min} }{\lambda_{\midd}}$. Note that $0<r_1, r_2<1$ and
\[
\Lambda_1(1)=\lambda_{\max},\quad \Lambda_1(N_1)=\lambda_{\midd},\quad \Lambda_2(1)=r_2^{\frac{1}{N_2}} \lambda_{\midd},\quad \Lambda_2(N_2)=\lambda_{\min}.
\]

\begin{table}[htbp]
  \centering
  \caption{Choices of iteration parameters $N_1, N_2, \tol_1, \tol_2$ for Algorithm~\ref{mainalg}.} \label{tab:parameters}%
    \begin{tabular}{|c||c|c|c|c|}
    \hline
    $r$ & $N_1$ & $\tol_1$ & $N_2$ & $\tol_2$ \\
     \hline
    $0<r<0.5$ & $5$ & $ 5\times 10^{-3}$ & $8$ & $ 10^{-4}$ \\
     \hline
    $0.5\le r<1$ & $8$ & $ 5\times  10^{-3}$ & $5$ & $ 10^{-3}$ \\
     \hline
    \end{tabular}%
\end{table}%

We now provide some explanation for the choices of the parameters in Algorithm~\ref{mainalg}.
Note that $1-r=\frac{\# \Omega}{n}$ is the percentage of the observable region $\Omega$ over the total size of an image. Since the noise standard deviation is $\sigma$,
the total squared noise energy on the region $\Omega$ is $\sigma^2 (1-r)n$. Therefore, after the tight frame transform, the noise standard deviation of each frame coefficient (after normalization of the filters) is approximately $\sigma \sqrt{1-r}$. On the other hand, the boundary of the inpainting mask often creates artificial jumps between the observable region and missing region. Hence, the boundary of the inpainting mask also contributes as another source of noise. As a consequence, we should set the positive minimal thresholding value $\lambda_{\min}$ greater than $\sigma \sqrt{1-r}$.
For simplicity, we take $\lambda_{\min}$ as in \eqref{lambdaminmax} and one can directly check that $\lambda_{\min}\ge \sigma \sqrt{1-r}$. The maximal thresholding value $\lambda_{\max}$ can be any large enough number. Since greyscale images use the greyscale level $[0,255]$, it is natural for us to simply set $\lambda_{\max}=512$, although other large numbers can be used for $\lambda_{\max}$. The intermediate thresholding value $\lambda_{\midd}$ is chosen to be twice as $\lambda_{\min}$ while avoiding to be too small or too large.

When Algorithm~\ref{mainalg} iterates with decreasing thresholding values $\lambda$ from the first sequence $\Lambda_1\subseteq [\lambda_{\midd},\lambda_{\max}]$, the iterative scheme tends to recover the main geometry structures and edges (i.e., the cartoon part) of the corrupted image.
When Algorithm~\ref{mainalg} iterates with decreasing thresholding values $\lambda$ from the second sequence $\Lambda_2\subseteq [\lambda_{\min},\lambda_{\midd}]$, the iterative scheme focuses on restoring the fine detail structures (roughly speaking, textures) of the corrupted image. See Figure~\ref{fig:twostage} for an illustration.
A similar strategy of using deceasing sequences of thresholding values can be found in all
frame-based iterative image restoration algorithms in \cite{cai2008framelet,cai2010simultaneous,cai2009split,chanshenshen,elad2005simultaneous,li2013adaptive,lim2013nonseparable}.
Note that $r$ is the percentage of the missing part (i.e., inpainting mask) over the total image size. Thus, if $r$ is large (i.e., a large portion of pixels is missing with large `holes'), then we can only expect to recover the geometry/background structures and hence, a large iteration step $N_1$ and a small iteration step $N_2$ are reasonable.
If $r$ is small (a small portion of pixels is missing with small `holes'), then we should emphasize on restoring fine structures by using a small iteration step $N_1$ and a large iteration step $N_2$ with a high precision $\tol_2$.

\begin{figure}[htb]
\centering
\subfigure[Clean image]{\includegraphics[width=3cm,height=2.6cm]{9.png}}
\subfigure[Corrupted image]{\includegraphics[width=3cm,height=2.6cm]{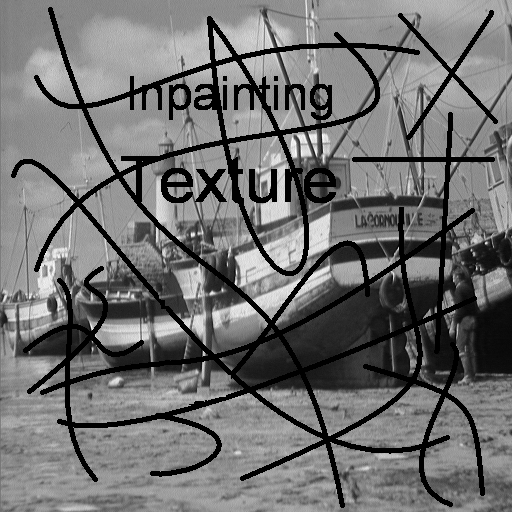}}
\subfigure[Restored (first stage)]{\includegraphics[width=3cm,height=2.6cm]{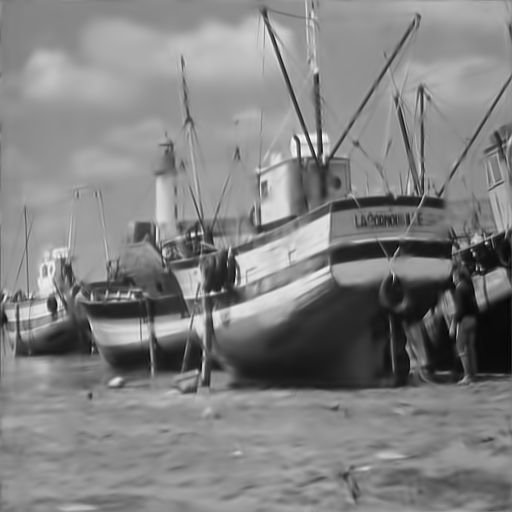}}
\subfigure[Restored (second stage)]{\includegraphics[width=3cm,height=2.6cm]{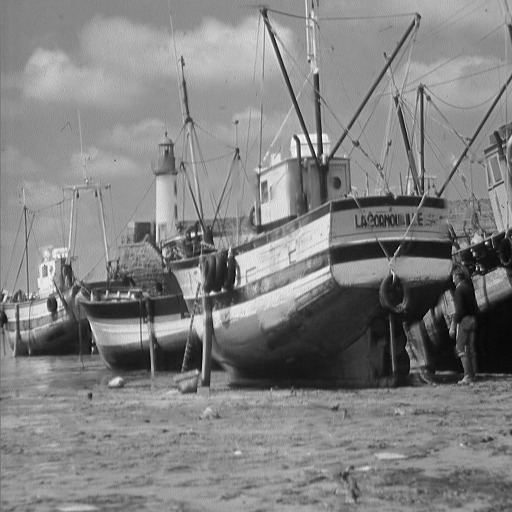}}
\caption{(a) is the $512\times 512$ clean image Boat.
(b) is the corrupted image by the inpainting mask Text~3 in Figure~\ref{testimages} without noise.
(c) and (d) are restored images by our proposed Algorithm~\ref{mainalg} with two decreasing sequences $\Lambda_1$ and $\Lambda_2$ of thresholding values given in \eqref{Lambda}.}\label{fig:twostage}
\end{figure}

\subsection{Numerical Experiments on Image Inpainting without Noise}

We now provide numerical experiments to test performance of our proposed Algorithm~\ref{mainalg} using tensor product complex tight framelet filter bank $\tpctf_6$ and compare its performance with six well-known frame-based image inpainting algorithms in
\cite{cai2008framelet,cai2010simultaneous,cai2009split,elad2005simultaneous,li2013adaptive,lim2013nonseparable}.
Recall that \cite{cai2008framelet} uses undecimated real-valued spline tight framelets, \cite{li2013adaptive} uses undecimated  DCT-Haar wavelet filters, and \cite{lim2013nonseparable} uses undecimated compactly supported nonseparable shearlets.
On the other hand, \cite{cai2010simultaneous,cai2009split,elad2005simultaneous} adopt simultaneous cartoon and texture inpainting by using two tight frames $\ws_c$ and $\ws_t$. \cite{elad2005simultaneous} uses $\ws_c$ built from curvelets for cartoon and $\ws_t$ built from the $50\%$ overlapping DCT for texture, while \cite{cai2010simultaneous,cai2009split} uses $\ws_t$ built from undecimated real-valued spline tight framelets for cartoon and $\ws_t$ built from the $50\%$ overlapping DCT for texture. See Section~\ref{sec:related} for more details on these frame-based iterative inpainting algorithms in \cite{cai2008framelet,cai2010simultaneous,cai2009split,elad2005simultaneous,li2013adaptive,lim2013nonseparable}.
The implementations of all the algorithms in \cite{cai2008framelet,cai2010simultaneous,cai2009split,elad2005simultaneous,li2013adaptive,lim2013nonseparable}
are provided by their own authors or downloaded from their homepages. We run all the related image inpainting algorithms with their default parameter values in their available codes. The greyscale test images and inpainting masks are given in Figure~\ref{testimages}.
Let $\x\in\R^d$ denote the true clean image and $\breve{\x}\in\R^d$ denote a restored/inpainted image. The quality of the restored/inpainted image is measured as usual by the Peak Signal-to-Noise Ratio (PSNR) defined by:
$\text{PSNR} = 10 \log_{10} (\tfrac{ 255^2 d }{\|\x-\breve{\x}\|_2^2})$.
The larger the PSNR value is, the better performance the inpainting algorithm is.

We now report the performance and comparison of our proposed Algorithm~\ref{mainalg} for the image inpainting problem without noise (that is, $\sigma=0$).
Since the performance and comparison of Algorithm~\ref{mainalg} for five $256\times 256$ greyscale test images in Figure~\ref{testimages} have been reported in Tables~\ref{tab:nonoise}--\ref{tab:noise}, we only report in Tables~\ref{tab:clean} and~\ref{tab:5122} the performance and comparison for five $512\times 512$ greyscale test images with $512\times 512$
inpainting masks Text~3 and Text~4 in Figure~\ref{testimages}, as well as with randomly missing pixels.

% Table generated by Excel2LaTeX from sheet 'Sheet2'
\begin{table}[htbp]
\centering
\caption{Performance in terms of PSNR values of several image inpainting algorithms without noise for five $512\times 512$ test images in (f)-(j) of Figure~\ref{testimages}.
Top left table for inpainting mask Text~3 in (m) of Figure~\ref{testimages}.
Top right table for inpainting mask Text~4 in (n) of Figure~\ref{testimages}.
Bottom left table for $50\%$ randomly missing pixels.
Bottom right table for $80\%$ randomly missing pixels.
Gain refers to the PSNR gain of our proposed Algorithm~\ref{mainalg} in row 7 over the highest PSNR values (underlined) among all the PSNR values in
the first 6 rows obtained by the $6$ inpainting algorithms in \cite{cai2008framelet,cai2010simultaneous,cai2009split,elad2005simultaneous,li2013adaptive,lim2013nonseparable}.
} \label{tab:clean}
    \begin{tabular}{|c|c|c|c|c|c||c|c|c|c|c|c|}
    \hline
      & Hill & Man & Boat & Barbara & Mandrill & Hill & Man & Boat & Barbara & Mandrill \\
      \hline
       \cite{cai2008framelet} & 35.08  & 34.50  & 33.10  & 31.89  & 29.26  & 31.63  & 30.70  & 29.29  & 29.04  & 26.84  \\
    \cite{cai2010simultaneous} & 34.94  & 33.99  & 33.05  & 34.52  & 28.85  & 31.71  & 30.36  & 29.35  & 30.88  & 26.47  \\
      \cite{cai2009split} & 34.00  & 33.34  & 32.93  & 34.11  & 28.82  & 31.01  & 29.55  & 28.84  & 30.56  & 26.30  \\
        \cite{elad2005simultaneous} & 34.98  & 33.68  & 33.77  & 34.24  & 28.91  & 31.76  & 30.28  & 29.85  & 31.17  & 26.30  \\
    \cite{li2013adaptive}  &\underline{35.73}  & 34.82  & 34.62  & 35.03  &\underline{29.89}  & \underline{\textbf{32.85}} & 31.34  & 30.35  & 31.51  &\underline{27.14}  \\
     \cite{lim2013nonseparable} & 35.69  &\underline{35.11}  &\underline{34.66}  &\underline{35.17}  & 29.66  & 32.13  &\underline{31.52}  &\underline{30.65}  &\underline{32.45}  & 27.02  \\
    Alg. \ref{mainalg} & \textbf{36.03} & \textbf{35.52} & \textbf{34.94} & \textbf{36.59} & \textbf{30.23} & 32.54  & \textbf{31.92} & \textbf{30.76} & \textbf{32.65} & \textbf{27.60} \\
    \hline
    Gain & 0.30  & 0.41  & 0.28  & 1.42  & 0.34  & -0.31 & 0.40  & 0.11  & 0.20  & 0.46  \\
    \hline\hline
     \cite{cai2008framelet} & 28.90  & 28.18  & 27.02  & 24.32  & 25.34  & 28.93  & 28.06  & 27.03  & 24.32  & 21.78  \\
    \cite{cai2010simultaneous} & 33.16  & 32.14  & 32.12  & 32.71  & 24.79  & 28.68  & 27.66  & 26.70  & 26.60  & 21.52  \\
      \cite{cai2009split} & 28.19  & 27.15  & 26.40  & 26.51  & 24.73  & 28.16  & 27.01  & 26.42  & 26.40  & 21.46  \\
       \cite{elad2005simultaneous} & 32.79  & 32.09  & 31.64  & 31.72  & 25.02  & 28.18  & 27.44  & 26.82  & 25.30  & 21.53  \\
    \cite{li2013adaptive}  &\underline{34.43}  &\underline{33.45}  &\underline{34.08}  & 33.85  &\underline{26.09}  & 28.99  & 27.69  & 27.87  & 26.39  & 21.05  \\
     \cite{lim2013nonseparable} & 33.11  & 32.81  & 33.07  &\underline{34.13}  & 25.33  & \underline{29.07}  &\underline{28.42}  &\underline{28.01}  &\underline{28.08}  &\underline{21.80}  \\
   Alg. \ref{mainalg} & \textbf{34.53} & \textbf{34.25} & \textbf{34.42} & \textbf{35.69} & \textbf{26.52} & \textbf{29.59} & \textbf{29.15} & \textbf{28.56} & \textbf{28.11} & \textbf{22.28} \\
   \hline
    Gain & 0.10  & 0.80  & 0.34  & 1.56  & 0.43  & 0.52 & 0.73  & 0.55  & 0.03  & 0.48  \\
    \hline
    \end{tabular}%
\end{table}%

% Table generated by Excel2LaTeX from sheet 'Sheet2'
\begin{table}[htbp]
  \centering
  \caption{Each (averaged) PSNR value here is the average of the PSNR values in Table~\ref{tab:clean} for the five test images for each inpainting algorithm. (Averaged) Gain here refers to the gain of the averaged PSNR value of our Algorithm~\ref{mainalg} over the averaged PSNR value of the corresponding inpainting algorithm.}  \label{tab:5122}
    \begin{tabular}{|c|cc|cc|cc|cc|}
    \hline
      & Text 3  & Gain & Text 4 & Gain & 50  \% Missing & Gain & 80\% Missing & Gain \\
      \hline
       \cite{cai2008framelet} & 32.77  & 1.87  & 29.50  & 1.58  & 26.75  & 6.33  & 26.03  & 1.51  \\
    \cite{cai2010simultaneous} & 33.07  & 1.57  & 29.75  & 1.33  & 30.98  & 2.10  & 26.23  & 1.31  \\
      \cite{cai2009split} & 32.64  & 2.00  & 29.25  & 1.83  & 26.60  & 6.48  & 25.89  & 1.65  \\
    \cite{elad2005simultaneous} & 33.11  & 1.53  & 29.88  & 1.20  & 30.65  & 2.43  & 25.85  & 1.69  \\
    \cite{li2013adaptive}  & 34.02  & 0.62  & 30.64  & 0.44  & 32.38  & 0.70  & 26.40  & 1.14  \\
     \cite{lim2013nonseparable} & 34.06  & 0.58  & 30.75  & 0.33  & 31.69  & 1.39  & 27.08  & 0.46  \\
    Alg. \ref{mainalg} & 34.64  &   & 31.08  &   & 33.08  &   & 27.54  &  \\
        \hline
    \end{tabular}%%p
\end{table}%

It is observed that the PSNR values in Tables~\ref{tab:clean} and~\ref{tab:5122} obtained by our proposed Algorithm~\ref{mainalg} are generally higher than those obtained by other algorithms. The visual comparison of the inpainted images is shown in  Figures~\ref{Fig:3}--\ref{Fig:2}. For visual comparison,  we choose two typical zoomed-in portions of the original test images. One is a typical  texture-rich image  shown in Figure \ref{Fig:1}. The other is a typical cartoon image shown in Figure~\ref{Fig:2}. We see that our proposed Algorithm~\ref{mainalg} can recover both the texture part  and the cartoon part well.

\subsection{Numerical Experiments on Noisy Image Inpainting}
%The proposed Algorithm \ref{mainalg} is stable for the noisy observation.
%See Figure \ref{Fig:4} for an example.
The inpainting algorithms in \cite{li2013adaptive,lim2013nonseparable} only consider the noiseless case. For the algorithms proposed in \cite{cai2008framelet,cai2010simultaneous,cai2009split,elad2005simultaneous}, the parameters of these iterative algorithms are often manually set and depend on the noise levels and test images.
As we already explained in Section~\ref{sec:introduction},
here we only report the performance of our proposed Algorithm~\ref{mainalg} for image inpainting with noise.
See Figure \ref{Fig:4} for an example.
We choose the i.i.d. zero-mean Gaussian noise with five different standard deviations $\sigma  = 5, 10, 20, 30, 50$ for $512\times 512$ test images in Figure~\ref{testimages}. The results are summarized in Table~\ref{tab:noisy512}.

% Table generated by Excel2LaTeX from sheet '512noise'
\begin{table}[htbp]
  \centering
\caption{Performance in terms of PSNR values of our proposed image inpainting algorithm
under i.i.d. zero-mean Gaussian noise with noise standard deviation $\sigma=5, 10, 20, 30, 50$
for five $512\times 512$ test images in (f)-(j) of Figure~\ref{testimages}.
Top left table for inpainting
mask Text~3 in (m) of Figure~\ref{testimages}.
Top right table for inpainting
mask Text~4 in (n) of Figure~\ref{testimages}.
Bottom left table for
$50\%$ randomly missing pixels.
Bottom right table for
$80\%$ randomly missing pixels.
}
    \begin{tabular}{|c||c|c|c|c|c||c|c|c|c|c|}
   \hline
    $\sigma$  & Hill & Man & Boat & Barbara & Mandrill & Hill & Man & Boat & Barbara & Mandrill\\
    \hline
    $5$ & 33.48  & 33.39  & 32.80  & 34.04  & 29.06  & 31.23  & 30.87  & 29.80  & 31.31  & 26.96  \\
    $10$ & 31.45  & 31.43  & 31.04  & 31.81  & 27.37  & 29.90  & 29.64  & 28.77  & 29.84  & 25.92  \\
    $20$ & 29.18  & 29.04  & 28.84  & 28.98  & 25.02  & 28.14  & 27.85  & 27.29  & 27.70  & 24.20  \\
    $30$ & 27.84  & 27.58  & 27.41  & 27.17  & 23.53  & 27.00  & 26.66  & 26.20  & 26.24  & 22.98  \\
    $50$ & 26.18  & 25.77  & 25.55  & 24.89  & 21.74  & 25.56  & 25.11  & 24.75  & 24.28  & 21.42  \\
    \hline\hline
    $5$ & 32.58  & 32.45  & 32.51  & 33.41  & 25.91  & 28.87  & 28.47  & 27.98  & 27.69  & 22.04  \\
    $10$ & 30.70  & 30.64  & 30.65  & 31.10  & 25.12  & 27.90  & 27.55  & 27.08  & 26.66  & 21.81  \\
    $20$ & 28.41  & 28.28  & 28.20  & 27.99  & 23.64  & 26.39  & 26.06  & 25.54  & 24.67  & 21.26  \\
    $30$ & 27.04  & 26.82  & 26.64  & 25.93  & 22.38  & 25.33  & 24.92  & 24.42  & 23.30  & 20.65  \\
    $50$ & 25.36  & 25.00  & 24.71  & 23.56  & 20.75  & 23.95  & 23.39  & 22.90  & 21.85  & 19.75  \\
    \hline
    \end{tabular}%
\label{tab:noisy512}%
\end{table}%

\begin{figure}[htb]
\centering
\subfigure[Clean image]{\includegraphics[width=3.0cm, height=2.8cm]{7.png}}
\subfigure[Corrupted image]{\includegraphics[width=3.0cm, height=2.8cm]{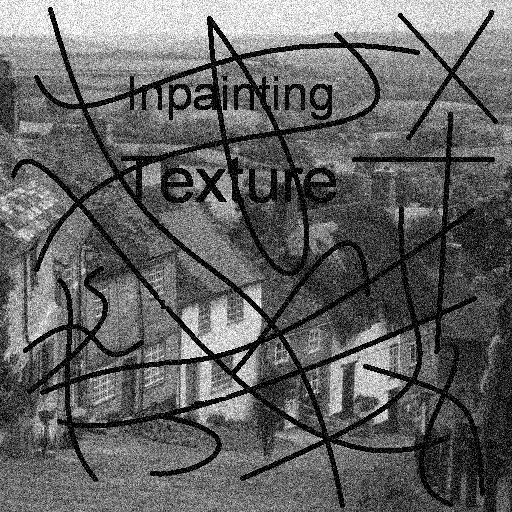}}
\subfigure[By Algorithm~\ref{mainalg}]{\includegraphics[width=3.0cm, height=2.8cm]{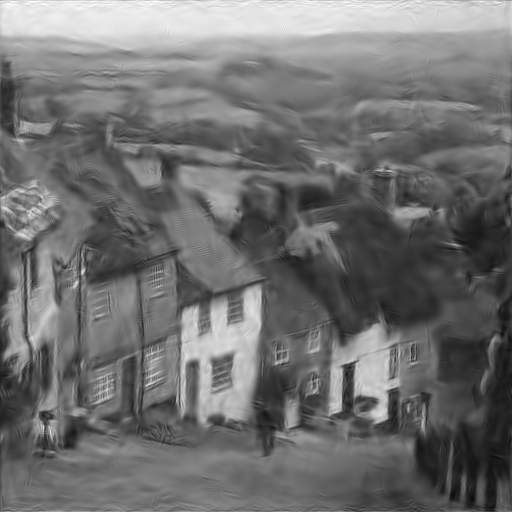}}
\caption{(a) is the $512\times 512$ clean image Hill.
(b) is the corrupted image with inpainting mask Text $3$ in (m) of Figure~\ref{testimages} and with i.i.d. zero-mean Gaussian noise having standard deviation $\sigma = 30$. (c) is the inpainted image by our proposed Algorithm~\ref{mainalg}.} \label{Fig:4}
\end{figure}

\begin{figure}[htb]
\centering
\subfigure[Clean image]
{\includegraphics[width=3.0cm, height=3cm]{9.png}}
\subfigure[Corrupted image]
{\includegraphics[width=3.0cm, height=3cm]{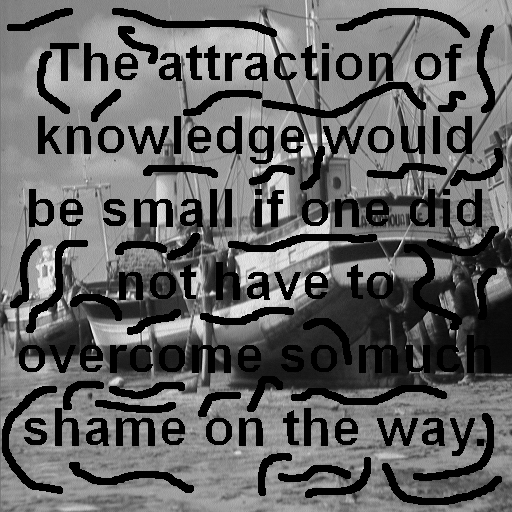}}
\subfigure[By  \cite{cai2008framelet}]
{\includegraphics[width=3.0cm, height=3cm]{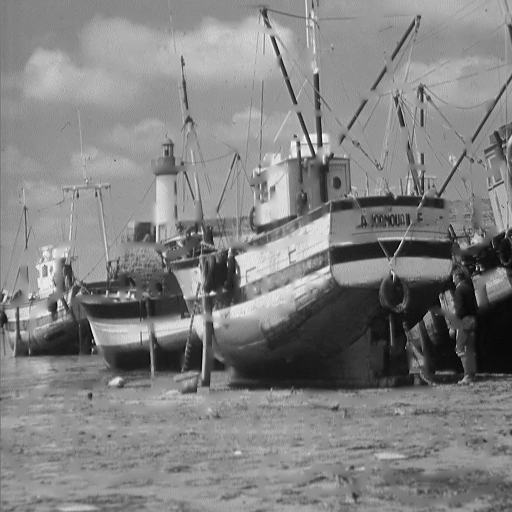}}
\subfigure[By  \cite{cai2010simultaneous}]
{\includegraphics[width=3.0cm, height=3cm]{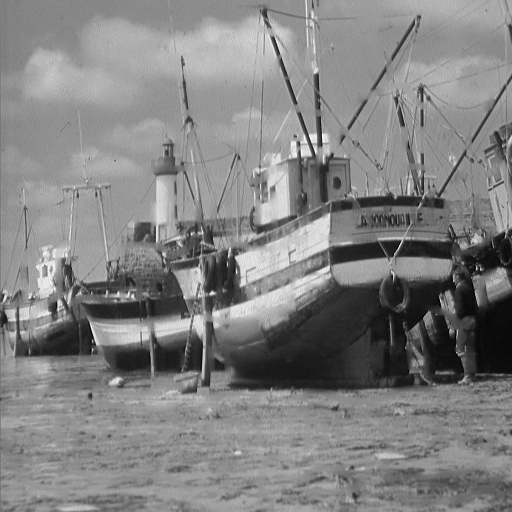}}
\subfigure[By \cite{cai2009split}]
{\includegraphics[width=3.0cm, height=3cm]{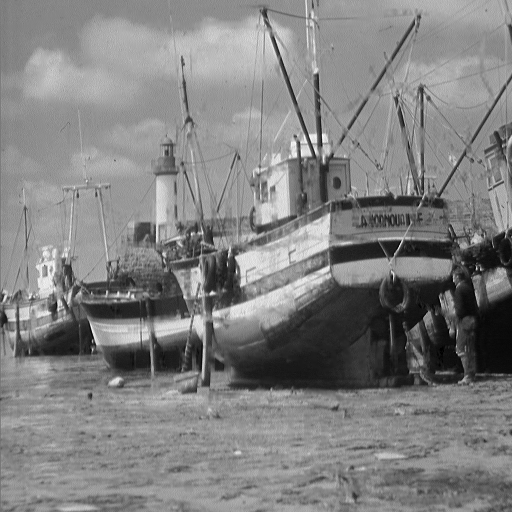}}\\
\subfigure[By  \cite{elad2005simultaneous}]
{\includegraphics[width=3.0cm, height=3cm]{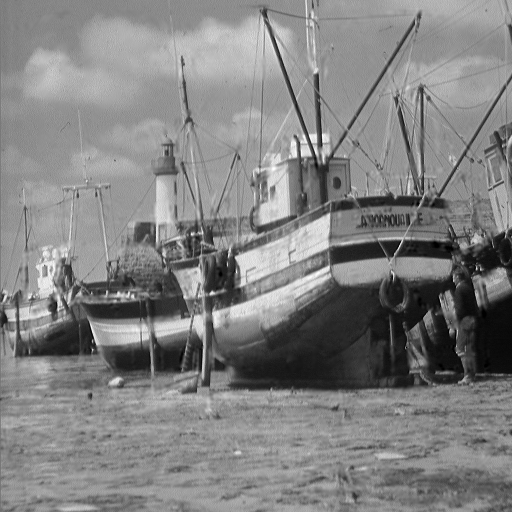}}
\subfigure[By  \cite{li2013adaptive}]
{\includegraphics[width=3.0cm, height=3cm]{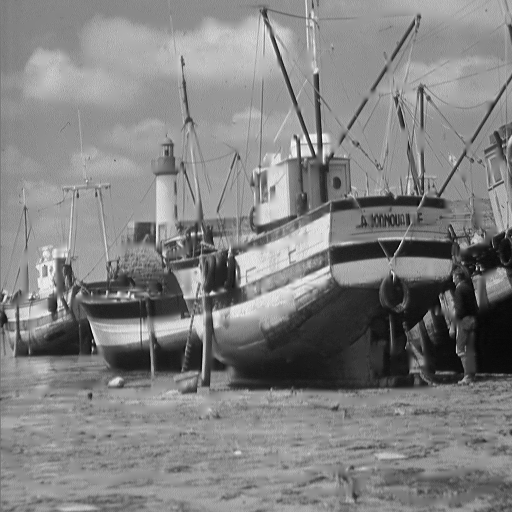}}
\subfigure[By  \cite{lim2013nonseparable}]
{\includegraphics[width=3.0cm, height=3cm]{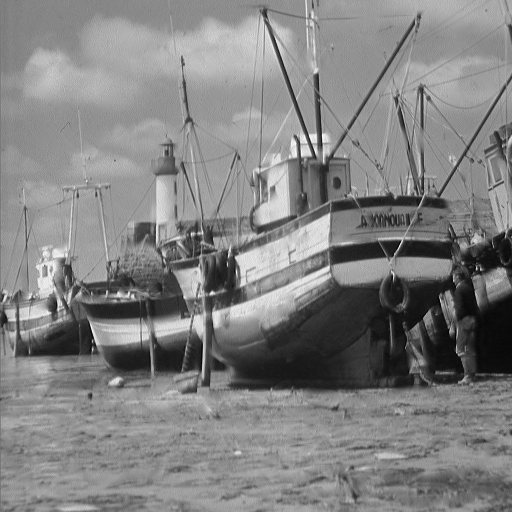}}
\subfigure[By  Algorithm \ref{mainalg}]
{\includegraphics[width=3.0cm, height=3cm]{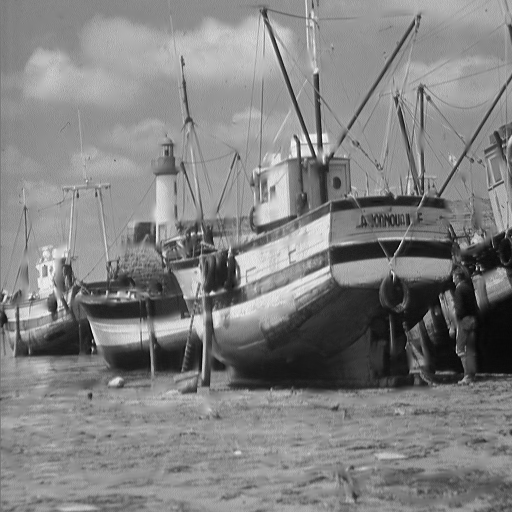}}
\caption{(a) is the $512\times 512$ clean image Boat. (b) is the corrupted image by Text~$4$ in (n) of Figure~\ref{testimages} without noise. (c)--(i) are the inpainted images by \cite{cai2008framelet,cai2010simultaneous,cai2009split,elad2005simultaneous,li2013adaptive,lim2013nonseparable} and our Algorithm~\ref{mainalg}.} \label{Fig:3}
\end{figure}

\begin{figure}[htb]
\centering
\subfigure[Clean image]
{\includegraphics[width=3.0cm, height=3cm]{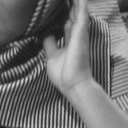}}
\subfigure[Corrupted image]
{\includegraphics[width=3.0cm, height=3cm]{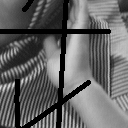}}
\subfigure[By  \cite{cai2008framelet}]
{\includegraphics[width=3.0cm, height=3cm]{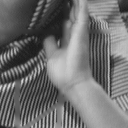}}
\subfigure[By  \cite{cai2010simultaneous}]
{\includegraphics[width=3.0cm, height=3cm]{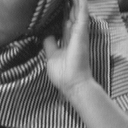}}
\subfigure[By  \cite{cai2009split}]
{\includegraphics[width=3.0cm, height=3cm]{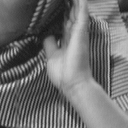}}\\
\subfigure[By  \cite{elad2005simultaneous}]
{\includegraphics[width=3.0cm, height=3cm]{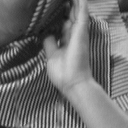}}
\subfigure[By  \cite{li2013adaptive}]
{\includegraphics[width=3.0cm, height=3cm]{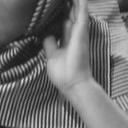}}
\subfigure[By  \cite{lim2013nonseparable}]
{\includegraphics[width=3.0cm, height=3cm]{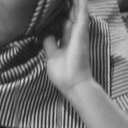}}
\subfigure[By  Algorithm \ref{mainalg}]
{\includegraphics[width=3.0cm, height=3cm]{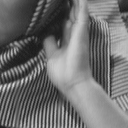}}
\caption{Zoomed-in portion of the $512\times 512$ clean image Barbara in (a), the corrupted image in (b) by inpainting mask Text~$3$ in (m) of Figure~\ref{testimages} without noise, and the inpainted images in (c)--(i) by \cite{cai2008framelet,cai2010simultaneous,cai2009split,elad2005simultaneous,li2013adaptive,lim2013nonseparable} and our Algorithm~\ref{mainalg}.} \label{Fig:1}
\end{figure}

\begin{figure}[hbt]
\centering
\subfigure[Clean image]
{\includegraphics[width=3.0cm, height=3cm]{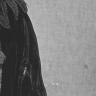}}
\subfigure[Corrupted image]
{\includegraphics[width=3.0cm, height=3cm]{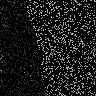}}
\subfigure[By  \cite{cai2008framelet} ]
{\includegraphics[width=3.0cm, height=3cm]{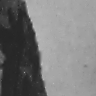}}
\subfigure[By  \cite{cai2010simultaneous}]
{\includegraphics[width=3.0cm, height=3cm]{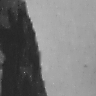}}
\subfigure[By  \cite{cai2009split}]
{\includegraphics[width=3.0cm, height=3cm]{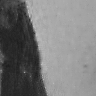}}\\
\subfigure[By  \cite{elad2005simultaneous}]
{\includegraphics[width=3.0cm, height=3cm]{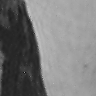}}
\subfigure[By  \cite{li2013adaptive}]
{\includegraphics[width=3.0cm, height=3cm]{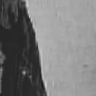}}
\subfigure[By  \cite{lim2013nonseparable}]
{\includegraphics[width=3.0cm, height=3cm]{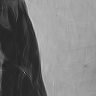}}
\subfigure[By  Algorithm \ref{mainalg}]
{\includegraphics[width=3.0cm, height=3cm]{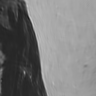}}
\caption{
Zoomed-in portion of the $512\times 512$ clean image Man in (a), the corrupted image in (b) by $50\%$ randomly missing pixels without noise, and the inpainted images in (c)--(i) by \cite{cai2008framelet,cai2010simultaneous,cai2009split,elad2005simultaneous,li2013adaptive,lim2013nonseparable} and our Algorithm~\ref{mainalg}.} \label{Fig:2}
\end{figure}

\section{Conclusions}
Motivated by the balanced approach in
\cite{cai2008framelet,cai2009split}
and the directional tensor product complex tight framelets in \cite{HanMMNP2013,HanZhaoSIIMS2014},
we proposed in this paper a frame-based image inpainting algorithm. Numerical results show that our proposed inpainting algorithm can restore corrupted images with better quality than those recovered by the state-of-the-art frame-based iterative inpainting algorithms in
\cite{cai2008framelet,cai2010simultaneous,cai2009split,elad2005simultaneous,li2013adaptive,lim2013nonseparable}.
Moreover, our proposed algorithm performs well for image inpainting under noise.
Motivated by the grouping effect property of elastic net in statistics,
we proved in Theorem~\ref{thm:ge}
the grouping effect property for the balanced approach using redundant frame systems. The proved grouping effect property in Theorem~\ref{thm:ge} partially explains the effectiveness of the balanced approach in image processing.
%The results in this paper may provide a bridge between statistics and frame-based analysis.
As a future work, we expect that our results on image inpainting could be further improved by exploring the freedom in the design of directional tensor product complex tight framelets in \cite{HanMMNP2013,HanZhaoSIIMS2014} or by using directional nonseparable tight framelets in \cite{HanACHA2012}.

\noindent \emph{Acknowledgement:}
The authors would like to thank Jian-Feng Cai and Yan-Ran Li for providing us their source matlab codes for their inpainting algorithms in \cite{cai2008framelet,cai2010simultaneous,cai2009split,li2013adaptive}.
The authors also thank Zhenpeng Zhao for providing us his source matlab code implementing the discrete framelet transform using $\tpctf_6$.

\goodbreak

\end{document}